\documentclass[a4paper,UKenglish,cleveref,autoref]{lipics-v2019}
\bibliographystyle{plainurl}

\usepackage{mathtools}
\usepackage{amssymb,amsmath,amsthm}
\usepackage{xspace}
\usepackage[basic,classfont=bold]{complexity}
\usepackage{todonotes}
\usepackage{stmaryrd}
\SetSymbolFont{stmry}{bold}{U}{stmry}{m}{n}
\usepackage{tikz}
\usetikzlibrary{arrows,automata,positioning}
\usepackage{cleveref}

\makeatletter
\let\epsilon\varepsilon
\let\phi\varphi

\let\rho\varrho
\makeatother

\usetikzlibrary{arrows.meta,decorations.markings,circuits,decorations.pathmorphing,
decorations.pathreplacing,arrows,automata,positioning,arrows,calc,fit,
backgrounds,shapes,matrix,intersections} 
\tikzstyle{Player1}=[circle, thick, minimum size=0.6cm, inner 
sep=0cm,draw=black]
\tikzstyle{State}=[circle, font = \small, thick, minimum size=0.9cm, inner 
sep=0cm,draw=black,fill=white]
\tikzstyle{Dummy}=[font = \small]
\tikzstyle{Final}=[circle, accepting, thick, minimum size=0.6cm, inner 
sep=0cm,draw=black]
\tikzstyle{RState}=[circle, very thick, minimum size=0.8cm, inner 
sep=0cm,draw=red]
\tikzstyle{tran}=[draw,->,font=\small,thick]
\tikzstyle{obstyle}=[rounded corners,fill=gray!20]

\renewclass{\P}{P\xspace}
\renewclass{\PSPACE}{PSPACE\xspace}
\renewclass{\EXP}{EXPTIME\xspace}

\def\abs#1{\ensuremath{\lvert #1\rvert}}

\renewcommand{\st}{\mathrel{\mid}}
\newcommand{\distr}{\mathcal{D}}
\newcommand{\prob}{P}
\newcommand{\rew}{R}
\newcommand{\probm}{\mathbb{P}}
\newcommand{\Pv}{\vec{p}}
\newcommand{\Rv}{\vec{v}}
\newcommand{\pathft}[1]{\stackrel{#1}{\rightarrow}}
\DeclareMathOperator*{\argmax}{\mathrm{argmax}}

\newcommand{\calC}{\mathcal{C}}
\newcommand{\states}{S}
\newcommand{\actions}{A}
\newcommand{\transitions}{\prob}

\newcommand{\mdp}{\mathcal{M}}
\newcommand{\Nset}{\mathbb{N}}

\newcommand{\rewards}{R}
\newcommand{\Qset}{\mathbb{Q}}

\newcommand{\odd}{\mathit{odd}}
\newcommand{\even}{\mathit{even}}

\def\code#1{\texttt{#1}}
\def\abs#1{\ensuremath{\lvert #1\rvert}}
\newcommand{\SLP}{S}

\newcommand{\rewval}{\Rv}

\newcommand{\valuation}{\mathit \nu}
\newcommand{\op}{\mathrel{\pm}}

\title{On the Complexity of Value Iteration} 

\author{Nikhil Balaji}{University of Oxford}{nikhil.balaji@cs.ox.ac.uk}{}{}
\author{Stefan Kiefer}{University of Oxford}{stekie@cs.ox.ac.uk}{}{is
supported by a Royal Society University Research Fellowship.}
\author{Petr Novotn\'y}{Masaryk
University}{petr.novotny@fi.muni.cz}{0000-0002-5026-4392}{is supported by the Czech Science Foundation, grant no. GJ19-15134Y ``Verification and Analysis of Probabilistic Programs.''}
\author{Guillermo A. P\'erez}{University of
Antwerp}{guillermoalberto.perez@uantwerpen.be}{0000-0002-1200-4952}{}
\author{Mahsa Shirmohammadi}{CNRS \&
IRIF}{mahsa.shirmohammadi@irif.fr}{}{is
supported by the ``AAPS'' PEPS JCJC grant.}

\authorrunning{N. Balaji et al.}

\Copyright{Nikhil Balaji, Stefan Kiefer, Petr Novotn\'y, Guillermo A. P\'erez,
and Mahsa Shirmohammadi}

\ccsdesc[500]{Theory of computation~Probabilistic computation}
\ccsdesc[500]{Theory of computation~Logic and verification}
\ccsdesc[500]{Theory of computation~Markov decision processes}

\keywords{Markov decision processes, Value iteration, Formal verification}

\acknowledgements{We thank James Worrell for
helpful comments on early version of this work.}


\nolinenumbers
\hideLIPIcs

\EventEditors{Christel Baier, Ioannis Chatzigiannakis, Paola Flocchini, and Stefano Leonardi}
\EventNoEds{4}
\EventLongTitle{46th International Colloquium on Automata, Languages, and Programming (ICALP 2019)}
\EventShortTitle{ICALP 2019}
\EventAcronym{ICALP}
\EventYear{2019}
\EventDate{July 9--12, 2019}
\EventLocation{Patras, Greece}
\EventLogo{eatcs}
\SeriesVolume{132}
\ArticleNo{97}

\begin{document}

\maketitle

\begin{abstract}
  Value iteration is a fundamental algorithm for solving Markov Decision
  Processes (MDPs).  It computes the maximal $n$-step payoff by iterating $n$
  times a recurrence equation which is naturally associated to the MDP.  At
  the same time, value iteration provides a policy for the MDP that is optimal
  on a given finite horizon $n$.  In this paper, we settle the computational
  complexity of value iteration.  We show that, given a horizon $n$ in binary
  and an MDP, computing an optimal policy is \EXP-complete, thus resolving an
  open problem that goes back to the seminal 1987 paper on the complexity of
  MDPs by Papadimitriou and Tsitsiklis. To obtain this main result, we develop
  several stepping stones that yield results of an independent interest. For
  instance, we show that it is \EXP-complete to compute the $n$-fold iteration
  (with $n$ in binary) of a function given by a straight-line program over the
  integers with $\max$ and $+$ as operators. We also provide new complexity
  results for the bounded halting problem in linear-update counter machines.
\end{abstract}

\section{Introduction}
Markov decision processes (MDP) are a fundamental formalism of decision making
under probabilistic uncertainty~\cite{puterman05,bertsekas2005dynamic}. As
such, they play a prominent role in numerous domains, including artificial
intelligence and machine
learning~\cite{sutton2018reinforcement,sigaud2013markov}, control
theory~\cite{blondel2000survey,NIPS2004_2569}, operations research and
finance~\cite{bauerle2011finance,schal2002markov}, as well as formal
verification~\cite{ModCheckHB18,ModCheckPrinciples08}, to name a few.
Informally, an MDP represents a system which is, at every time step, in one of
the \emph{states} from a finite set $S$. The system evolves in steps: in each
step, we can perform an \emph{action} (or \emph{decision}) from a finite set
$A$. When using an action $a\in A$ in state $s\in S$, we collect an immediate
\emph{reward} $\rewards(s,a)$ and then transition stochastically to a
successor state according to a rational-valued distribution
$\transitions(s,a)$, which is given as a part of the MDP. This interaction
with an MDP proceeds over either a \emph{finite} or \emph{infinite} horizon.
In the finite-horizon case, we are given a bound $H\in \Nset$ (a
\emph{horizon}) such that the interaction stops after $H$ steps; in the
infinite horizon case the process goes on forever. To \emph{solve} an MDP
means to find an optimal \emph{policy}; that is, a blueprint for selecting
actions that maximizes the expected reward accumulated over a finite or
infinite horizon. The accumulated rewards are typically \emph{discounted} by
some factor $0<\gamma \leq 1$; for infinite horizon, we need $\gamma < 1$ to
ensure that the infinite sum is well defined.

\subparagraph{Value iteration.} Given the importance of MDPs, it is hardly
surprising that they have attracted significant interest in the theory
community. Past research on MDPs included the study of complexity
issues~\cite{pt87} as well as the design and analysis of algorithms for
solving
MDPs~\cite{DBLP:conf/uai/LittmanDK95,DBLP:conf/uai/MansourS99,ye2005new,ye2011simplex}.
In this paper, we provide a fresh look on one of the most familiar algorithms
for MDPs: \emph{value iteration (VI)}. Introduced by Bellman in the
1950s~\cite{bellman2013dynamic}, VI makes use of the optimality principle: the
maximal $n$-step reward achievable from a state $s$, which we denote by
$\rewval_n(s)$, satisfies the recurrence
\begin{equation}\label{eq:intro-bellman}
    \rewval_n(s) = \max_{a\in \actions}\left\{ \rewards(s,a) + \gamma \cdot
      \sum_{s' \in
      \states} \transitions(s,a)(s')\cdot \rewval_{n-1}(s')
    \right\}\enspace,
\end{equation} 
with $\rewval_0(s)=0$.  Consequently, a finite-horizon policy is optimal if
and only if it chooses, in a situation when the current state is~$s$ and $n$
steps are remaining, an action maximizing the right-hand side (RHS)
of~\eqref{eq:intro-bellman}. Thus, to solve an MDP with a finite horizon $H$,
the VI algorithm computes the values $\rewval_n(s)$ for all $0\leq n \leq H$
and all states $s$, by iterating the recurrence~\eqref{eq:intro-bellman}.
Using these values, VI then outputs (using some tie-breaking rule) some policy
satisfying the aforementioned optimality characterization. VI can be deployed
also for infinite-horizon MDPs: one can effectively compute
a horizon $H$ such that action $a$ is optimal in state $s$ for an infinite
horizon\footnote{In infinite-horizon MDPs, there is always an optimal
\emph{stationary} policy, which makes decisions based only on the current
state.~\cite{puterman05}} if it maximizes the RHS of~\eqref{eq:intro-bellman}
for $n=H$~\cite{Bertsekas:1987:DPD:26970}. This $H$ has a bit-size which is
polynomial in the size of the original MDP, but the magnitude of $H$ can be
exponential in the size of the MDP if the discount factor is given in
binary~\cite{DBLP:conf/uai/LittmanDK95}. 

VI is one of the most popular MDP-solving algorithms due to its versatility
(as shown above, it can be used for several MDP-related problems) and
conceptual simplicity, which makes it easy to implement within different
programming paradigms~\cite{vicav18,DBLP:conf/ecai/WuHGZL16}, including
implementation via neural nets~\cite{VINetworks16}. Several variants of VI
with improved performance were
developed~\cite{tseng,DBLP:journals/jair/DaiMWG11}. For instance, the recent
paper by Sidford et al.~\cite{sidford18soda} presented a new class of
randomized VI techniques with the best theoretical runtime bounds (for certain
values of parameters) among all known MDP solvers. The paper also expresses
hope that their techniques ``\emph{will be useful in the development of even
faster MDP algorithms.}'' To get insight into the underlying structure of VI,
which might enable or limit further such accelerations, we take a
complexity-theoretic vantage point and study the theoretical complexity of
computing an \emph{outcome} of a VI execution. That is, we consider the
following decision problem \textsc{ValIt}: given an MDP with a finite horizon
$H$ (encoded as a binary number), does a given action $a$ maximize the RHS
of~\eqref{eq:intro-bellman} for $n=H$? This problem is inspired by the paper
of Fearnley and Savani~\cite{Fearnley15stoc}, where they show \PSPACE-hardness
(and thus also completeness) for the problem of determining an outcome of
\emph{policy iteration}, another well-known algorithm for MDP solving. To the
best of our knowledge, VI has not yet been explicitly subjected to this type
of analysis. However, questions about the complexity of \textsc{ValIt} were
implicitly raised by previous work on the complexity of finite-horizon MDPs,
as discussed in the next paragraph.

\subparagraph{Finite-horizon MDPs.} The complexity of finite-horizon MDPs is a
long-standing open problem. Since ``finding an optimal policy'' is a function
problem, we can instead consider the decision variant: ``In a given
finite-horizon MDP, is it optimal to use a given action in the first step?'' As
discussed above, this is exactly the \textsc{ValIt} problem in disguise.

In the seminal 1987 paper on the complexity of MDPs~\cite{pt87}, Papadimitriou
and Tsitsiklis showed \P-completeness of a special case of finite-horizon
optimization where the horizon $H$ has magnitude polynomial in the size of the
MDP. At the same time, they noted that in the general case of binary-encoded
$H$, VI can be executed on an \EXP-bounded Turing machine (since $H$ is
represented using $\log(H)$ bits, the number of iterations is exponential in
the size of the input). Hence \textsc{ValIt} is in \EXP. However, the exact
complexity of the general finite-horizon optimization remained open ever
since, with the best lower bound being the \P-hardness inherited from the
``polynomial $H$'' sub-problem.
Tseng~\cite{tseng} presented a more efficient (though
still exponential) algorithm for finite-horizon MDPs satisfying a
certain stability condition; in the same paper, he comments that \emph{``in view
of the stability assumptions needed to obtain an exact solution and the absence
of negative results, we are still far from a complete complexity theory for this
problem.''}

In this paper, we address this issue, provide the missing negative results,
and \emph{provide tight bounds on the computational complexity} of
\textsc{ValIt} and finite-horizon MDP optimization. 

\subsection*{Our Results} The main result of the paper is that \textsc{ValIt}
is \EXP-complete (\Cref{th-finite-rew}). In the rest of this section, we first
explain some challenges we needed to overcome to obtain the result. Then we
sketch our main techniques and conclude with discussing the significance of
our results, which extends beyond MDPs to several areas of independent
interest.

\subsection*{Challenges} \subparagraph*{Bitsize of numbers.} One might be
tempted to believe that \textsc{ValIt} is in \PSPACE, since the algorithm
needs to store only polynomially many values at a time. However, the bitsize
of these values may become exponentially large during the computation (e.g.,
the quantity $\rewval_n(s)$ may halve in every step). Hence, the algorithm
\emph{cannot} be directly implemented by a polynomial-space Turing machine
(TM). One could try to adapt the method of Allender et al.~\cite{hab02,abd14}
based on an intricate use of the Chinese remainder representation (CRR) of
integers. However, there is no known way of computing the $\max$ operation
directly and efficiently on numbers in CRR.

\subparagraph*{Complex optimal policies.} Another hope for \PSPACE{}
membership would be a possibly special structure of optimal policies. Fixing
any concrete policy turns an MDP into a Markov chain, whose $H$-step behavior
can be evaluated in polynomial space (using, e.g., the aforementioned CRR
technique of Allender et al.). If we could prove that (A) an optimal policy
can be represented in polynomial space and (B) that the Markov chain induced
by such a policy is polynomially large in the size of the MDP, we would get
the following \PSPACE{} algorithm: cycle through all policies that satisfy (A)
and (B), evaluate each of them, and keep track of the best one found so far. 
Tseng~\cite{tseng} commented that optimal policies in finite-horizon MDPs are
``poorly understood''. Hence, there was still hope that optimal Markovian
deterministic policies may have a shape that satisfies both (A) and (B).
Unless \PSPACE{} = \EXP{}, our results put such hopes to rest.

\subparagraph*{No hardness by succinctness.} One might try to prove
\EXP-hardness using a \emph{succinctness} argument. The results of~\cite{pt87}
show that \textsc{ValIt} is $\P$-hard when the horizon is written in unary,
and many optimization problems over discrete structures incur an exponential
blow-up in complexity when the discrete structure is encoded succinctly, e.g.,
by a circuit~\cite{py86}. Giving a horizon $H$ in binary amounts to a succinct
encoding of an exponentially large MDP obtained by ``unfolding'' the original
MDP into a DAG-like MDP of depth $H$. This unfolded MDP is ``narrow''
in the sense that it consists of many polynomial-sized layers, while standard
$\EXP$-hardness-by-succinctness proofs, use succinct structures of an
exponential ``width'' and ``depth'', accommodating the tape contents of an
\EXP-bounded TM. Hence, straightforward succinctness proofs do not apply here;
e.g., there does not seem to be a direct reduction from the succinct circuit
value problem.

\subsection*{Our Techniques} To obtain \EXP-hardness of \textsc{ValIt}, we
proceed by a sequence of non-trivial reductions. Below we outline these
reductions in the order in which they appear in the sequence, see
Figure~\ref{fig:chain}. In the main text, we present the reductions in a
different order (indicated by the numbering of propositions and theorems), so
that we start with MDPs and gradually introduce more technical notions. 

\begin{figure}
\centering
\tikzstyle{prob}=[rounded corners, inner sep=3pt,rectangle, text centered,
align=center, draw, font=\footnotesize]
\tikzstyle{reduct}=[->,font=\footnotesize]
\begin{tikzpicture}
\node[prob] (TM) {\EXP-bounded\\ Turing Machine\\ termination};
\node[prob,right=1.6cm of TM] (CP) {Bounded termination\\ of simple counter programs};
\node[prob,right=1.6cm of CP] (POW) {Straight-Line\\ Program powering};
\node[prob,below=1cm of POW] (MonPOW) {Monotone\\ SLP powering};
\node[prob,below=1cm of CP] (Sync) {Synchronizing\\
reachability in\\ Markov decision processes};
\node[prob,below=0.6cm of TM] (Reach) {Finite-Horizon\\ reachability};
\node[prob,below=1.6cm of TM,very thick] (Val) {Value Iteration};
\draw[reduct] (TM) -- node[above] {Prop~\ref{pro:exp-hardness-simple-cms}}  (CP);
\draw[reduct] (CP) -- node[above] {Thm~\ref{thm:counter-to-slppow}}  (POW);
\draw[reduct] (POW) -- node[right] {Thm~\ref{thm:monotone-hard}}  (MonPOW);
\draw[reduct] (MonPOW.west|-Sync) -- node[above]
  {Thm~\ref{thm:power-to-sync}}  (Sync);
\draw[reduct] (Sync.west|-Val) -- node[above] {Thm~\ref{th-sync-to-rew}}  (Val);
\draw[reduct] (Sync.west|-Reach) -- node[above] {Thm~\ref{th-finite-reach}}  (Reach);
\end{tikzpicture}
\caption{Chain of reductions.}
\label{fig:chain}
\end{figure}

We start from a canonical \EXP-complete problem: the halting problem for an
exponential-time TM.  We then present a reduction to a halting problem for a
class of counter programs (CPs; simple imperative programs with integer
variables) that allow for linear variable updates. In this way, we encode the
tape contents into numerical values (\ref{pro:exp-hardness-simple-cms}). The
crucial feature of this reduction is that the produced CP possesses a special
\emph{simplicity} property, which imposes certain restrictions on the use of
tests during the computation. 

Next, we introduce \emph{straight-line programs (SLPs)} with $\max$, $+$, and
$-$ operations. SLPs are a standard model of arithmetic
computation~\cite{allender-numerical} and they can be equivalently viewed as
arithmetic circuits consisting (in our case) of $\max$, $+$, and $-$ gates. We
also consider a sub-class of SLPs with only $\max,+$ operations, so called
\emph{monotone SLPs.} We define the following \emph{powering problem:} given a
function $f\colon \Qset^n \rightarrow \Qset^n$ represented as an SLP, a
horizon $H$, an initial argument $\vec{x}\in \{0,1\}^n$, and two indices
$1\leq i,j\leq n$, is it true that the $i$-component of $f^{H}(\vec{x})$, i.e.
the image of $\vec{x}$ with respect to the $H$-fold composition of $f$, is
greater than the $j$-component of $f^{H}(\vec{x})$? Although VI in MDPs does
not necessarily involve integers, the powering problem for monotone SLPs
captures the complexity inherent in iterating the
recurrence~\eqref{eq:intro-bellman}.  To obtain a reduction from CPs to SLP
powering, we construct SLP gadgets with $\max$, $+$ and $-$ (minus) operations
to simulate the tests in CPs; the simplicity of the input CP is crucial for
this reduction to work (\Cref{thm:counter-to-slppow}). To get rid of the minus
operation, we adapt a technique by Allender et al.~\cite{AKM}, which
introduces a new ``offset'' counter
and models
subtraction by increasing the value of the offset (\Cref{thm:monotone-hard}).

A final step is to show a reduction from monotone SLP powering to
\textsc{ValIt}. The reduction proceeds via an intermediate problem of
\emph{synchronizing reachability in MDPs} (maximize the probability of being
in a target set of states $T$ \emph{after exactly} $H$ steps \cite{dms14}).
This divides a rather technical reduction into more comprehensible parts. We
present novel reductions from monotone SLP powering to synchronizing
reachability~(\Cref{thm:power-to-sync}), and from the latter problem to
\textsc{ValIt} (\Cref{th-sync-to-rew}). As a by-product, we present a
reduction proving \EXP-hardness of finite-horizon reachability in MDPs,
arguably the conceptually simplest objective in probabilistic
decision-making~(\Cref{th-finite-reach}).

\subsection*{Significance}
As our main result, we characterize the complexity of computing an outcome of
VI, one of the fundamental algorithms for solving both finite- and
infinite-horizon MDPs. As a consequence, we resolve a long-standing complexity
issue~\cite{pt87} of solving finite-horizon MDPs.

On our way to proving this result, we encounter non-trivial stepping stones
which are of an independent interest.  First, we shed light on the complexity
of succinctly represented arithmetic circuits, showing that comparing two
output wires of a given $(\max,+)$-circuit incurs an exponential blow-up in
complexity already when employing a very rudimental form of succinctness:
composing a single $(\max,+)$-circuit with $H$ copies of itself, yielding a
circuit of exponential ``height'' but only polynomial ``width.''
Second, we obtain new hardness results for the bounded reachability  
problem in linear-update counter programs. CPs are related to  
several classical abstractions of computational machines, such as Minsky  
machines and Petri nets~\cite{Petri-STOC}, see~\cite{Petri-elementary} for  
a recent breakthrough in this area. Our work establishes a novel  
connection between counter programs and MDPs.

\subsection*{Further Related Work} Our work is also related to a series of
papers on finite-horizon
planning~\cite{littman1997probplan,goldsmith1997complexity,GMcomplexity,littman1998jair}.
The survey paper \cite{Mundhenk-fh-survey} provides a comprehensive overview
of these results. These papers consider either MDPs with a polynomially large
horizon, or \emph{succinctly} represented MDPs of possibly exponential
``width'' (the succinctness was achieved by circuit-encoding). The
aforementioned hardness-by-succinctness proofs are often used here. The
arbitrary horizon problem for standard MDPs, which we study, is left open in
these papers, and our work employs substantially different techniques. The
complexity of finite-horizon \emph{decentralized} MDPs was studied
in~\cite{dec-mdp}.

\section{Markov Decision Processes and Finite-Horizon Problems}
We start with some preliminaries. A \emph{probability distribution}~$d:S\to [0,1]$ over a finite set~$\states$ is  a 
function such that $\sum_{s\in \states} d(s)=1$.
We denote by $\distr(\states)$ the set of all (rational) probability distributions over~$S$. 
The \emph{Dirac} distribution on~$s\in \states$ 
assigns probability 1 to~$s$.

A Markov decision process (MDP)~$\mdp=(\states,\actions,\prob, \rew,\gamma)$ consists of
a finite set~$\states$ of \emph{states}, a finite set~$\actions$  of \emph{actions},
a  \emph{transition  function} $\prob\colon \states \times\actions \to \distr(\states)$,
a \emph{reward function} $\rew: \states \times \actions \to \mathbb{Q}$, and 
a \emph{discount factor}~$\gamma\in (0,1]$.
The transition function $\prob$ assigns to each state~$s$ and action~$a$ a distribution over 
the successor states, while the reward function assigns to $s$ and $a$ a rational reward.

A \emph{path}~$\rho$ is an alternating sequence $s_0 a_1 s_1  \cdots a_n s_n$ of 
visited states and played actions in~$\mdp$ (that starts and ends in a state);
 write $\abs{\rho}=n$ for the length of~$\rho$. We may use $s_0 \pathft{\rho} s_n$	
to denote that path~$\rho$ goes from $s_0$ to $s_n$.
We extend the reward function $\rew$ from  single  state-action pairs to 
paths by $\rew(\rho)=\sum_{1\leq i\leq n} \rew(s_{i-1},a_i) \gamma^{i-1}$.

A \emph{policy} for the controller is  a function~$\sigma$ that assigns to 
each path a distribution over actions.
Let $\probm_{\mdp,s,\sigma}(\rho)$ denote
the probability of a path $\rho$ starting in~$s$ when the controller 
follows the policy~$\sigma$.
This probability is  defined inductively by setting
$\probm_{\mdp,s,\sigma}(s_0)=1$ if $s=s_0$, and $\probm_{\mdp,s,\sigma}(s_0)=0$ otherwise. 
For a path $\rho=s_0 a_1 s_1 \cdots s_{n-1}  a_n s_n$, we set
\[
  \probm_{\mdp,s,\sigma}(\rho)=\probm_{\mdp,s,\sigma}(s_0 \cdots s_{n-1})
  \cdot \sigma(s_0 \cdots s_{n-1})(a_n) \cdot \prob(s_{n-1},a_n)(s_n)\enspace.
\]
We omit the subscripts  from $\probm_{\mdp,s,\sigma}(\cdot)$ if they are clear from the context. Additionally, we extend $\probm_{\mdp,s,\sigma}(\cdot)$ to sets of paths of the same length by summing the probabilities of all the paths in the set.

In this paper, we focus on a special class of policies:
A \emph{(deterministic) Markov} policy 
	is a function $\sigma: \mathbb{N}\times \states \to \actions$.
Intuitively, a controller following a Markov policy plays $\sigma(n,s)$ from $s$ if it is the $n$-th visited state, irrespective of the other states in the path.
	Markov policies suffice for the problems we consider.

%
%

\subsection{Finite-Horizon  Problems}\label{sec:fin-horizon-probs}
Given an MDP~$\mdp$, the core problem of MDPs is computing the values of states 
with respect to 
the \emph{maximum expected  reward}.
Let $\Rv_n \in \mathbb{Q}^{\states}$ denote the vector of $n$-step maximum expected rewards obtainable from each state of the MDP. That is, for all $s \in S$ we have that
\[ 
  \Rv_n (s) = \max_{\sigma} \left(\sum_{\abs{\rho}=n}
  \probm_{s,\sigma}(\rho) \cdot \rew(\rho) \right)\enspace.
\]
Note that $\Rv_0=\vec{0}$ by this definition. The vector~$\Rv_{n}$ can be
computed by \emph{value iteration}, i.e. by iterating the recurrence stated in
Equation~\eqref{eq:intro-bellman}.  From that recurrence, for each $n\in
\mathbb{N}$ and state~$s_0$, one can extract an (optimal) Markov policy
$\sigma$ that achieves the maximum value $\Rv_{n}(s_0)$ after $n$ steps: for
each $s\in \states$ and for $1\leq i\leq n$ we have
\[ 
 \sigma(i-1,s)=\argmax_{a \in \actions} \left\{  \rew(s,a) + \gamma \cdot
   \sum_{s' \in
 \states} \prob(s,a)(s') \cdot \Rv_{n-i}(s') \right\}\enspace.
\] 

\begin{figure}[t]
\begin{center}
\scalebox{.8}{
\begin{tikzpicture}
\node[draw=none] at (0, 1.7)   (n) {MDP $\mathcal{N}$};
\node[state] at (0, 0)   (s1) {$s$};
\node[state] at (0, -2)   (s2) {$s_1$};
 \node[state] at (2.5, 1.5)   (s3) {$s_2$};
\node[state] at (5, 0)   (s4) {$t$};
 \node[state] at (2.5, 0)   (s5) {$s_3$};

\path[->] (s1)  edge [bend right] node[midway, left, align=center] {$a,0:\frac{1}{2}$\\$b,0:1$}  (s2) ;
\path[->] (s2) edge [bend right]  node[midway, right] {$a,b,2:1$} (s1) ;

\path[->] (s1) edge  node[near end, left] {$a,0:\frac{1}{2}~~$} (s3) ;
\path[->] (s3) edge  node[near end, above] {$~~~~~~a,b,2:1$} (s4) ;
\path[->] (s4) edge  node[midway, below] {$a,b,1:1$} (s5) ;
\path[->] (s5) edge  node[midway, below] {$a,b,0:1$} (s1) ;

\node[draw=none] at (11, 0)   (s1) {
  $\begin{aligned}
    \Rv_{n}(s)=&\max\left(\frac{1}{4} \Rv_{n-1}(s_1)+\frac{1}{4}\Rv_{n-1}(s_2),
      \frac{1}{2}\Rv_{n-1}(s_1)\right)\\
    \Rv_{n}(s_1)=&2+\frac{1}{2}\Rv_{n-1}(s)\\
    \Rv_{n}(s_2)=&2+\frac{1}{2}\Rv_{n-1}(t)\\
    \Rv_{n}(t)=&1+\frac{1}{2}\Rv_{n-1}(s_3)\\
    \Rv_{n}(s_3)=&\frac{1}{2}\Rv_{n-1}(s)
  \end{aligned}$};

%

\end{tikzpicture}
}
\caption{ The transitions are labelled with actions, rewards and their probabilities. For example, 
the reward of the transition from~$s$ to $s_1$ on action~$a$ is $0$, and its probability is~$\frac{1}{2}$.
}
\label{fig:valreward}
\end{center}
\vspace{-.7cm}
\end{figure}

Papadimitriou and Tsitsiklis posed the {\bf finite-horizon reward
problem} which asks to compute such an optimal policy for the controller~\cite{pt87}. 
Formally, given an MDP $\mdp$, an initial state $s_0 \in S$, a distinguished action $a \in A$, and a horizon $H \in \mathbb{N}$ encoded in binary,
the finite-horizon reward problem asks whether there exists a 
policy achieving $\Rv_H(s_0)$
by choosing~$a$ as the first action from~$s_0$. 
Note that this problem is equivalent to the \textsc{ValIt} problem defined in the introduction.

Consider the MDP~$\mathcal{N}$ depicted in Figure~\ref{fig:valreward} with~$\gamma=\frac{1}{2}$.
By iterating the indicated recurrence, 
we have that $\Rv_5(s)=\max(\frac{1}{4} \Rv_4(s_1)+\frac{1}{4}\Rv_4(s_2)  , \frac{1}{2}\Rv_4(s_1))=\frac{41}{32}$.
The value of $\Rv_5(s)$ is due to the second
argument of $\max$ (corresponding to  action~$b$), hence
a policy to maximize~$\Rv_5(s)$ starts with~$b$ in~$s$. 

The finite-horizon reward problem can be
decided by value iteration in exponential time by unfolding
recurrence~\eqref{eq:intro-bellman} for~$H$ steps~\cite{puterman05}, while 
the best known lower bound is \P-hardness~\cite{pt87}.
Our main result closes this long-standing complexity gap:

\begin{theorem}\label{th-finite-rew}
The finite-horizon reward problem (and thus also the \textsc{ValIt} problem) is \EXP-complete.
\end{theorem}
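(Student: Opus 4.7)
Membership in \EXP{} is the easy half, already observed in~\cite{pt87}. Given $H$ written in binary I would simply run value iteration, iterating the Bellman recurrence~\eqref{eq:intro-bellman} $H$ times while maintaining the vector $\rewval_n\in \Qset^\states$. The number of iterations is at most $2^{|H|}$ (exponential in the input size) and the bitsize of each intermediate value is at most polynomial in $H$ (hence again exponential); both quantities fit comfortably inside deterministic exponential time. After the $H$-th iteration the answer is read off by checking whether the queried action $a$ attains the argmax of the RHS of~\eqref{eq:intro-bellman} at $s_0$.

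\textbf{Lower bound.} For \EXP{}-hardness my plan is to concatenate the chain of reductions depicted in Figure~\ref{fig:chain}, all of which are supplied earlier in the paper. Starting from the canonical \EXP{}-complete problem of halting for an exponential-time Turing machine, Proposition~\ref{pro:exp-hardness-simple-cms} reduces to bounded termination of \emph{simple} linear-update counter programs by encoding tape contents as integer variables. Theorem~\ref{thm:counter-to-slppow} then transforms this into the powering problem for straight-line programs over $\{\max,+,-\}$ by building SLP gadgets that simulate the tests of the CP, where the simplicity property is precisely what makes such gadgets expressible. Theorem~\ref{thm:monotone-hard} eliminates the subtraction operator via the offset-counter technique of Allender et al., yielding hardness of \emph{monotone} $(\max,+)$-SLP powering. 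Finally, Theorem~\ref{thm:power-to-sync} converts monotone SLP powering into synchronizing reachability in an MDP, and Theorem~\ref{th-sync-to-rew} encodes synchronizing reachability as a finite-horizon reward instance. Since polynomial-time reductions compose, \textsc{ValIt} inherits \EXP{}-hardness, and together with the upper bound this establishes \EXP{}-completeness.

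\textbf{Where the real work sits.} The conceptual heart of the construction, and the step I expect to be most delicate, is Theorem~\ref{thm:power-to-sync}: turning arithmetic operations $\max$ and $+$ on rational vectors into probabilities of being in a distinguished set of states after \emph{exactly} $H$ MDP steps. One must encode each SLP variable as probability mass at a dedicated state, arrange a consistent normalization at every layer so that both $\max$ and $+$ gates become local stochastic choices of the controller, and align the MDP's step count with the SLP's exponent so that one application of $f$ corresponds to one MDP step. A close second in difficulty is securing the simplicity property in Proposition~\ref{pro:exp-hardness-simple-cms}: the reduction from TM halting must constrain the form of tests performed by the generated CP so that downstream the $\max,+,-$ gadgets of Theorem~\ref{thm:counter-to-slppow} can simulate them. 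Once these two reductions are handled carefully, the remaining links in the chain are more routine and the overall composition yields the theorem.
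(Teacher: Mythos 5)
Your proposal is correct and follows essentially the same route as the paper: \EXP{} membership by running value iteration directly (the numbers have bitsize exponential in the input but polynomial in $H$, as observed in~\cite{pt87}), and \EXP-hardness by composing exactly the chain of Figure~\ref{fig:chain}, i.e.\ Proposition~\ref{pro:exp-hardness-simple-cms}, Theorem~\ref{thm:counter-to-slppow}, Theorem~\ref{thm:monotone-hard}, Theorem~\ref{thm:power-to-sync}, and Theorem~\ref{th-sync-to-rew}. The only slight difference is one of emphasis: the paper identifies the CP-to-SLP-powering step (Theorem~\ref{thm:counter-to-slppow}) as the most technically involved link, whereas you single out Theorem~\ref{thm:power-to-sync}, but this does not affect correctness.
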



To prove \EXP-completeness of the finite-horizon reward problem, we introduce 
a variant of  reachability, which we  call \emph{synchronized reachability}~\cite{dms14}.
Let $t\in \states$ be a target state. 
For  reachability, the objective is to maximize the probability of taking a path from $s$ to $t$, whereas in 
synchronized reachability only a subset of such paths with the same length are considered.

%
Let $\mdp$ be an MDP,~$s_0$ an initial state, and $a$ an action.
Define $\Pv_{\leq n} \in \mathbb{Q}^{\states}$ as the vector of maximum probabilities of taking a path to $t$ within $n$ steps.
Similarly, define $\Pv_{=n} \in \mathbb{Q}^{\states}$ to be
the vector of maximum probabilities
of taking such a path with length exactly~$n$. Formally, for all $s \in S$ we have that
%
\[
  \Pv_{\leq n}(s)=\max_{\sigma} \left( \probm_{s,\sigma} (\{s \pathft{\rho} t :
    \abs{\rho}\leq n\}) \right) \text{ and }
  \Pv_{= n}(s) = \max_{\sigma} \left( \probm_{s,\sigma} (\{s \pathft{\rho} t
  : \abs{\rho}=n\}) \right)\enspace.
\]
%
Given a horizon $H$, encoded in binary,
the {\bf finite-horizon reachability problem} asks 
	whether an optimal policy achieving $\Pv_{\leq H}(s_0)$
 chooses action~$a$ as the first action from~$s_0$; 
the {\bf finite-horizon synchronized-reachability problem} asks
	whether an optimal policy achieving $\Pv_{= H}(s_0)$
chooses action~$a$ as the first action from~$s_0$.

\subsection{Connections Among Finite-Horizon  Problems}
We now prove the following theorem.

\begin{theorem}\label{th-sync-to-rew}
The finite-horizon synchronized-reachability problem reduces, in polynomial time, to the finite-horizon reward problem.
\end{theorem}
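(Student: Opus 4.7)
The plan is to exploit the structural similarity between the two Bellman recurrences. With discount $\gamma = 1$ and zero rewards, the reward recurrence
\[ \Rv_n(s) = \max_a \sum_{s'} \prob(s,a)(s') \Rv_{n-1}(s') \]
is identical in form to the synchronized reachability recurrence
\[ \Pv_{=n}(s) = \max_a \sum_{s'} \prob(s,a)(s') \Pv_{=n-1}(s'), \]
so the only obstacle is the mismatched base case: $\Rv_0 \equiv 0$ versus $\Pv_{=0}(s) = [s=t]$. The reduction will inject the correct base case via a single optional ``terminate'' step.

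Given an instance $(\mdp, s_0, a, t, H)$ of synchronized reachability, I would construct $\mdp'$ by adjoining to $\mdp$ two fresh absorbing states $g$ and $b$ and a new action $\tau$. All original transitions are preserved with reward $0$. From each original state $s$, action $\tau$ transitions deterministically to $g$ if $s = t$ and to $b$ otherwise, and yields reward $[s=t]$. The state $g$ self-loops with reward $-1$; state $b$ self-loops with reward $0$. Set $\gamma' = 1$ and query the reward problem on $\mdp'$ at horizon $H' = H + 1$, initial state $s_0$, distinguished action $a$. The construction is clearly polynomial-time.

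The core lemma, to be proved by induction on $i$, is $\Rv_i(s) = \Pv_{=i-1}(s)$ for every $s \in S$ and every $i \geq 1$ (where $\Rv$ now refers to $\mdp'$ and $\Pv_{=}$ to $\mdp$). The base $i=1$ is immediate since $\max_a R'(s,\cdot) = [s=t]$. For the inductive step, the original actions contribute $\max_a \sum \prob(s,a)(s') \Pv_{=i-2}(s') = \Pv_{=i-1}(s)$ by the induction hypothesis (the original transitions remain in $S$), while $\tau$ contributes $[s=t] + \Rv_{i-1}(g \text{ or } b)$. A direct calculation gives $\Rv_{i-1}(g) = -(i-1)$ and $\Rv_{i-1}(b) = 0$, so the $\tau$-value is $0$ at $s \neq t$ and $2-i$ at $s = t$. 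Since $\Pv_{=i-1}(\cdot) \geq 0$ dominates the $\tau$-value for $i \geq 2$ (and the two coincide at $i=1$), the maximum is $\Pv_{=i-1}(s)$. Specializing to $i = H+1$ gives $\Rv_{H+1}(s_0) = \Pv_{=H}(s_0)$, and an analogous comparison of individual action-values at $s_0$ shows that $a$ is optimal at $s_0$ in $\mdp'$ at horizon $H+1$ if and only if it is optimal for synchronized reachability in $\mdp$.

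The delicate point, and what I expect to be the main obstacle, is calibrating the penalty at $g$ so that termination at the target is strictly suboptimal at every intermediate horizon while still matching the desired boundary value exactly at horizon one. Any milder penalty would allow an early $\tau$-move at $t$ to dominate, collapsing the computation toward a reachability-like quantity and breaking the intended alignment with synchronized reachability.
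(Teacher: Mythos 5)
Your reduction is correct, but it is a genuinely different construction from the paper's. The paper subdivides every transition of $\mdp$ into two steps through fresh intermediate states ($\states\times\{\even,\odd\}$), assigns reward $1/\gamma$ on the intermediate (odd) steps and reward $1$ only at the target's even copy, and queries horizon $2H+1$; the point of that design is that the accumulated reward under \emph{any} strategy is a strategy-independent constant plus $\gamma^{2H}$ times the probability of sitting at the target at time $2H$, so the reduction works for an arbitrary prescribed discount factor $\gamma\in(0,1]$ and uses only nonnegative rewards. You instead exploit the freedom to choose the target instance's discount factor: with $\gamma'=1$, a single ``cash-out'' action $\tau$ to absorbing states $g$ (penalty $-1$ per step) and $b$, and horizon $H+1$, your induction $\Rv_i(s)=\Pv_{=i-1}(s)$ is a clean, direct alignment of the two Bellman recurrences, and the action-value comparison at $s_0$ goes through exactly as you say (for $H\geq 1$; the degenerate $H=0$ case should be excluded or handled separately, since there $\tau$ can strictly beat $a$ at $s_0=t$ even though the source question is vacuous). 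What you gain is simplicity: no doubling of the horizon, no offset bookkeeping, a one-line inductive lemma. What you give up is generality: your penalty calibration ($\Rv_{i-1}(g)=-(i-1)$ dominating the cash-out bonus) relies on undiscounted accumulation and on negative rewards, and indeed with $\gamma<1$ the fixed $-1$ penalty fails to deter an early $\tau$ at $t$ (the $\tau$-value becomes $\gamma^{i-1}>0$), so your reduction does not establish the paper's stronger ``for any discount factor $\gamma$'' version, which is what their two-step/offset construction is built to deliver. Both are valid proofs of the theorem as stated, since the reduction is free to output an instance with $\gamma'=1$ and rational (possibly negative) rewards under the paper's MDP definition.
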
 

Consider an MDP~$\mdp$, an initial state~$s_0$, an action~$a$ and a target state~$t$. The following recurrence can be used to compute
$\Pv_{= n}(s)$:
\begin{equation}\label{eq-rec-synch}
  \Pv_{=n}(s)=\max_{a \in \actions} \left\{  \sum_{s'\in \states}  \prob(s,a)(s') \cdot
  \Pv_{=n-1}(s') \right\}\enspace,
\end{equation}
where $\Pv_0(t)=1$ and $\Pv_0(s)=0$ for all $s \neq t$.
We construct a new MDP~$\mathcal{N}$
obtained from~$\mdp$ by replacing 
 all transitions by two consecutive transitions.
The construction is such that the
probability of going from~$s$ to $t$ with a path of length~$n$ in~$\mdp$
is equal to the probability of going from~$s$ to $t$ with a path of  length~$2n$ in~$\mathcal{N}$. 
More formally, for all $s,s'$ and $a$ with $\prob(s,a)(s')=p$, the
transition
\scalebox{.8}{
\tikz[baseline=-0.5ex,
state/.style={inner sep=0cm,circle,draw,minimum size=0.6cm}]{
\node[state] at (0, 0)   (a) {$s$};
\node[state] at (2, 0)   (c) {$s'$};

\path[->] (a)  edge  node[midway, above]  {$-:p$} (c) ;
}}  is replaced 
 with 
\scalebox{.8}{
\tikz[baseline=-0.5ex,
state/.style={inner sep=0cm,circle,draw,minimum size=0.6cm}]{
\node[state] at (0, 0)   (a) {$s$};
 \node[state] at (2, 0)   (b) {};
\node[state] at (4, 0)   (c) {$s'$};

\path[->] (a)  edge node[midway, above] {$1:p$} (b) ;
\path[->] (b) edge  node[midway, above] {$0:1$} (c) ;
}}
if $s = t$ and with
\scalebox{.8}{
\tikz[baseline=-0.5ex,
state/.style={inner sep=0cm,circle,draw,minimum size=0.6cm}]{
\node[state] at (0, 0)   (a) {$s$};
 \node[state] at (2, 0)   (b) {};
\node[state] at (4, 0)   (c) {$s'$};

\path[->] (a)  edge node[midway, above] {$0:p$} (b) ;
\path[->] (b) edge  node[midway, above] {$\frac{1}{\gamma}:1$} (c) ;
}}
otherwise; where~$0<\gamma\leq 1$ is an arbitrary chosen 
discount factor for $\mathcal{N}$, and 
 the intermediate state in both cases
is a new state.
The MDP $\mathcal{N}$ in Figure~\ref{fig:valreward} is the result of applying the construction to $\mdp$
in Figure~\ref{fig:connection} 
with $\gamma=\frac{1}{2}$.

For the constructed MDP $\mathcal{N}$, one can show that for all states~$s$,
an action is optimal to maximize $\Pv_{=2H}(s)$ if and only if it is optimal to maximize $\Rv_{2H+1}(s)$.
Consider the MDPs from Figure~\ref{fig:valreward} as an example.
We have previously argued that
a policy maximizing $\Rv_{5}(s)$ in $\mathcal{N}$ starts with action~$b$.
Observe that the optimal first choice to maximize $\Pv_{4}(s)$ is also~$b$.
This implies that an optimal policy of~$\mdp$ for synchronized-reachability with $H=2$
starts with~$b$, too.
By the above argument,
the finite-horizon synchronized-reachability problem reduces to the finite-horizon reward problem.

Hence, to obtain \Cref{th-finite-rew}, it remains to determine the complexity of the finite-horizon synchronized-reachability
problem.
To this aim, we show a close connection between MDPs and a class of piecewise-affine functions
represented by \emph{straight line programs} (SLPs). Section~\ref{sec-slps} provides the details.

\subparagraph{Finite-horizon reachability.}
We also show the finite-horizon synchronized-reachability problem 
reduces to the finite-horizon reachability problem.
We remark that the natural probability-$1$ variants
of these  problems have different complexities:
specifically, the problem
of reaching $t$ from $s$ within $H$ steps with probability~$1$ 
is  in \P;
however, the analogous problem of
reaching $t$ from $s$ in exactly $H$ steps with probability~$1$ 
is \PSPACE-complete~\cite{dms14}.

\begin{theorem}\label{th-finite-reach}
The finite-horizon synchronized reachability problem reduces, in polynomial time, to the finite-horizon reachability problem. 
\end{theorem}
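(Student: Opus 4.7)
The plan is to construct, given an instance $(\mdp, s_0, a, t, H)$ of the finite-horizon synchronized-reachability problem, a polynomial-time computable instance $(\mdp', s_0', a', t', H')$ of the finite-horizon reachability problem whose yes-instances coincide. The MDP $\mdp'$ would be built by augmenting $\mdp$ with a fresh absorbing target state $t'$ together with additional intermediate states produced by a step-refinement analogous in spirit to \Cref{th-sync-to-rew}: each transition of $\mdp$ is replaced by two consecutive transitions in $\mdp'$, with a new ``checkpoint'' state inserted in between. The new horizon $H'$ would be a small polynomial in $H$ (for example $H' = 2H + 1$), which keeps the reduction polynomial in the encoding size.

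At each checkpoint I would add a filter branch that routes to the absorbing target $t'$ \emph{only} when the currently simulated $\mdp$-state is $t$; otherwise the execution continues into the original $\mdp$-successor. Keeping $t'$ distinct from $t$ decouples the absorbing property of the reach target from the dynamics at $t$, so that the Bellman recurrence for $\Pv_{\leq k}$ in $\mdp'$ (at the non-$t'$ states) can be made to mirror the recurrence for $\Pv_{=k}$ in $\mdp$. This separation is essential, because reachability would ordinarily cap the value at $t$ to $1$ for every horizon, whereas synchronized reachability continues to evolve the value at $t$ according to the standard max-plus dynamics.

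The main obstacle is to ensure that the only paths in $\mdp'$ reaching $t'$ within $H'$ steps correspond to paths of $\mdp$ being at $t$ at exactly step $H$. A naive construction that allows $t'$ to be reached from any visit to $t$ would realize $\max_{k \le H} \Pv_{=k}(s_0, \mdp)$ rather than $\Pv_{=H}(s_0, \mdp)$, and the optimal first action at $s_0$ for this aggregate maximum need not coincide with the one for $\Pv_{=H}$. The technical heart of the reduction is therefore the design of the checkpoint gadget so that premature visits to $t$ (before the full horizon has elapsed) contribute zero probability to reaching $t'$, e.g.\ by pairing the filter branch with a phase marker that is absorbing and only ``live'' after exactly $H$ simulated $\mdp$-steps. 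Once this is in place, the correspondence of optimal first actions at $s_0$ follows from the equivalence of the two Bellman recurrences, completing the reduction.
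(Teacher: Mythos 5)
You correctly identify the crux --- a naive ``route to the target whenever the simulated state is $t$'' construction computes the wrong quantity --- but your proposed fix is exactly the missing idea, not a solution to it. You suggest a ``phase marker that is absorbing and only live after exactly $H$ simulated $\mdp$-steps.'' Since $H$ is given in binary and the constructed MDP must be of polynomial size and is time-homogeneous, there is no gadget of this kind available off the shelf: a deterministic step counter that distinguishes time $H$ from all earlier times needs on the order of $H$ states, i.e.\ exponentially many, and making premature visits to $t$ contribute \emph{zero} probability to reaching $t'$ is essentially as hard as enforcing synchronization itself --- so the argument is circular at its core. (The two-step refinement you import from \Cref{th-sync-to-rew} also does no work here; it was needed there to install rewards on the intermediate states, and nothing analogous is needed for a reachability target.) As written, the reduction has no mechanism that ties reaching $t'$ to being at $t$ at time exactly $H$, which is the entire content of the theorem.

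The paper's proof enforces the timing through \emph{optimality} rather than through structural impossibility, which is why it avoids any counter. It adds two sinks $g$ (target) and $b$, redefines every original action to leak to $g$ with probability $\frac{1}{3}$ and to follow the old transition with probability $\frac{2}{3}$, and adds one fresh action $f$ that from $t$ goes to $g$ with probability $\frac{1}{2}$ (and to $b$ otherwise), while from every other state $f$ goes to $b$. With reachability horizon $H+1$, playing $f$ before the last step is strictly suboptimal because two further ordinary steps already reach $g$ with probability $1-\left(\frac{2}{3}\right)^2=\frac{5}{9}>\frac{1}{2}$, whereas at the last step it is optimal to play $f$ exactly when the current state is $t$, since $\frac{1}{2}>\frac{1}{3}$. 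Hence the optimal reachability value is a strictly increasing function of $\Pv_{=H}(s_0)$ in the original MDP, optimal strategies correspond, and in particular the optimal first action is preserved --- a quantitative ``leak'' argument of this sort (or some substitute for it) is what your proposal still needs.
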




\section{Straight-Line Programs and The Powering Problem}
\label{sec-slps}

We now establish the connection between MDPs and SLP powering. We start with
preliminaries.

For all $n\in \mathbb{N}$, define the set~$var_n:=\{x_1,\ldots,x_n\}$ of
variables and the collection of terms 
\[
  \mathcal{T}_n:=\{ a_1x_{j_1}+\cdots+a_nx_{j_n}+b \mid a_i,b\in \{-1,0,1\}
  \text{ and } 1 \leq j_i \leq n, \text{ for all } 1\leq i\leq n\}.
\]
A \emph{straight-line program} (SLP) of \emph{order}~$n$ is a sequence $c_1,\dots,c_m$ of \emph{commands}
of the form $x \leftarrow \max(T)$, where $x\in var_n$ and $T\subseteq \mathcal{T}_n$ is non-empty.
We refer to commands $x \leftarrow b$ as \emph{initializations}. 
Recall that $\min(x,y) = -\max(-x, -y)$.

For complexity analyses we shall assume that $T$, for every command, is given
explicitly as a list of terms. Each term is also assumed to be explicitly
represented as a constant, a list of coefficients $a_i$, and a list of indices
$j_i$, both lists having length $n$ (i.e. the number of variables). The size
of $T$, and also that of the command, corresponds to the length of its list of
terms; the size of the SLP, the sum of the sizes of its commands.

A \emph{valuation} $\valuation$ is a vector in $\mathbb{Z}^{n}$, 
where the $i$-th coordinate gives the value of $x_i$.
The semantics of a command~$c$ is a function $\llbracket c\rrbracket : \mathbb{Z}^{n} \to \mathbb{Z}^{n}$,
transforming a valuation into another.
An SLP $\SLP = c_1,\dots,c_m$ defines the function~$\llbracket \SLP\rrbracket: \mathbb{Z}^{n} \to \mathbb{Z}^{n}$
obtained by composing the constituent commands: 
$\llbracket \SLP\rrbracket =  \llbracket c_m \rrbracket \circ \dots  \circ \llbracket  c_1 \rrbracket$.
Clearly this is a piecewise-affine function.
Given a function $f:\mathbb{Z}^{n} \to \mathbb{Z}^{n}$, we define its $m$-th
power as $f^m:\mathbb{Z}^{n} \to \mathbb{Z}^{n}$ where
\[
  f^m = \underbrace{f \circ \cdots \circ f}_{m \text{ times}}
\]
is the $m$-fold composition of~$f$. 

We denote by~$\mathcal{T}^{+}_n$ the set of terms~$a_1x_{j_1}+\cdots+a_nx_{j_n}+b$
where the coefficients $a_1,\cdots,a_n,b$ are in~$\{0,1\}$.  
An SLP that only uses terms in~$\mathcal{T}^{+}_n$ is called \emph{monotone}. 
Note that monotone SLPs induce monotone functions from $\mathbb{Z}^{n}$ to $\mathbb{Z}^{n}$ 
(subtraction and $\min$ are not allowed).

\subsection{The Powering Problem}
For an SLP~$\SLP$ of order~$n$, a  valuation $\valuation\in 
\mathbb{N}^n$
 and $m \in \mathbb{N}$ (encoded in binary), 
let $\valuation' = \llbracket \SLP \rrbracket^m(\valuation)$. 
Given two variables  $x,y\in var_n$ of the SLP, the \textbf{powering   problem}
asks whether 
$\valuation'(x) \geq \valuation'(y)$.
Since the initial valuations~$\valuation$ are always
non-negative,
 all valuations obtained by powering monotone SLPs are non-negative.
The above problem is  \P-complete 
 if the exponent~$m$  is written in unary~\cite{greenlaw1995limits}.

Observe that all numbers generated 
by powering an SLP can be represented using exponentially-many bits in the bitsize 
of the exponent. It follows that
 the powered SLP can be explicitly evaluated in exponential time. We provide a matching lower bound in Section~\ref{sec:maintech}. Before that, we show the connection of SLP powering to MDPs.

\subsection{Synchronized Reachability and SLP Powering}
The connection is stated in the following Theorem.

\begin{theorem}
\label{thm:power-to-sync}
The powering problem for monotone SLPs reduces, in polynomial time, to the finite-horizon synchronized reachability problem in MDPs.
\end{theorem}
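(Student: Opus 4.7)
The plan is to encode monotone SLP powering as sync reachability by building a compact, cyclically structured MDP whose value-iteration recurrence is a scaled copy of the SLP's recurrence. I focus on the case of initial valuation $\nu_0 = \vec{0}$; arbitrary $\nu_0 \in \mathbb{N}^n$ is reduced to this by a polynomial-size preprocessing of $S$ that I omit from the sketch. Given $S=c_1,c_2,\ldots,c_M$ of order $n$, I introduce a dilution constant $K:=n+2$ and states $s_{i,k}$ for $1\leq i\leq n$, $0\leq k< M$, together with a target state $t$, a sink $\bot$, and a fresh initial state $s_0$ with two actions $a_x,a_y$ deterministically routing to $s_{x,0}$ and $s_{y,0}$ respectively.

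From $s_{i,k}$ the transitions simulate the SLP command $c_{k'}$, where $k'=k$ if $k\geq 1$ and $k'=M$ otherwise. If $c_{k'}$ is the update $x_i\leftarrow\max_{r\in T}(\sum_j a_j^{(r)} x_j + b^{(r)})$, I create one action per term $r\in T$; this action routes probability $a_j^{(r)}/K$ to $s_{j,(k-1)\bmod M}$ for each $j$, probability $b^{(r)}/K$ to $t$, and the remainder to $\bot$ (valid because the coefficients and constant lie in $\{0,1\}$, so the numerator is at most $n+1<K$). If $c_{k'}$ does not mention $x_i$, a single action sends probability $1/K$ to $s_{i,(k-1)\bmod M}$ and the rest to $\bot$. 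Crucially, $t$ has a self-loop of probability $1/K$ with the rest going to $\bot$, which yields $\Pv_{=n}(t)=1/K^n$; this ``scaled constant-one'' is what lets the additive $b^{(r)}$ terms in the SLP commands be simulated correctly.

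The central invariant, proved by induction on $n$, is that $\Pv_{=n}(s_{i,k})=\nu^{(n)}(x_i)/K^n$ whenever $k=n\bmod M$, where $\nu^{(n)}$ is the SLP valuation obtained by executing $n$ commands of the periodic sequence $c_1 c_2 \cdots c_M c_1 \cdots$ starting from $\vec{0}$. The base case $n=0$ is immediate since no $s_{i,0}$ equals $t$. The inductive step substitutes the hypothesis into the sync-reachability recurrence: the transition's factor $1/K$ combines with the successors' $1/K^{n-1}$ to yield $1/K^n$ on the left, the max-over-terms matches the SLP's max, and the additive $b^{(r)}/K^n$ (furnished through $t$) provides the SLP's additive constant. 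Aligned $(n,k)$ pairs remain aligned under the transitions, so the induction closes without ever needing to reason about misaligned values.

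Setting horizon $H:=eM+1$ makes $\Pv_{=H}(s_0)=\max(\nu_e(x),\nu_e(y))/K^{eM}$, so an optimal policy selects $a_x$ first if and only if $\nu_e(x)\geq\nu_e(y)$, matching the powering question. The MDP has $O(nM)$ states, its transitions have polynomial encoding, and $H$ is polynomially sized in binary, giving a polynomial-time reduction. The main technical hurdle is making a single cyclic MDP of polynomial size simulate, under value iteration alone, an SLP evaluation of exponential length while injecting the correct additive constants into the arithmetic; the dilution constant $K$ (which normalises probabilities and produces a uniform $K^n$ scaling) and the diluted self-loop at $t$ (which injects the constants) are the two ingredients that make this synchronisation come out exactly.
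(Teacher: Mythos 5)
Your core construction is sound and, at heart, the same as the paper's: variables become states, each summand of a term contributes an equal slice of probability, and value iteration for synchronized reachability then computes a $1/K^n$-scaled copy of the SLP iterates, with the comparison converted into a first-action question from a fresh initial state. Your refinements --- the phase-indexed states $s_{i,k}$ so that the $M$ commands of one SLP pass are simulated sequentially (one VI step per command rather than per pass), the dilution constant $K$ with a sink $\bot$, and the $1/K$ self-loop at $t$ to inject the additive constants $b^{(r)}$ --- are a clean way to handle points that the paper's two-variable example glosses over, and the invariant $\Pv_{=n}(s_{i,k})=\nu^{(n)}(x_i)/K^n$ for $k=n\bmod M$ does close under the recurrence as you claim. (Minor point: with repeated indices a term may effectively carry coefficients up to $n$, not just $\{0,1\}$; your bound $n+1<K$ still covers the total mass, so this is only a wording issue.)

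The genuine gap is the initial valuation. You prove the case $\nu_0=\vec{0}$ and assert that arbitrary $\nu_0\in\mathbb{N}^n$ ``is reduced to this by a polynomial-size preprocessing of $S$''; no such preprocessing is given, and the obvious candidates fail. Powering re-executes every command of $S$ at every round, so a prepended initialization block is re-applied each iteration and destroys the simulation; replacing it by $x_i\leftarrow\max(x_i,\nu_0(x_i))$ is unsound because the iterates of a monotone SLP need not dominate $\nu_0$ componentwise (consider a command $x_1\leftarrow 0$); and monotone SLPs have no mechanism to branch on ``first round''. This matters because the powering instances arising earlier in the paper's chain have nonzero initial valuations (they encode a counter-program configuration), so the zero-valuation case alone does not yield the theorem. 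The repair belongs in the MDP, not in $S$: encode $\nu_0$ in the base case of the induction. For $\nu_0\in\{0,1\}^n$ this is exactly what the paper does --- make the states of value-$1$ variables targets --- or, if you insist on a single target, unroll the first SLP round into fresh bottom-level states whose term-actions send the mass of a summand $x_j$ to $t$ if $\nu_0(x_j)=1$ and to $\bot$ otherwise. For a general binary-encoded $\nu_0$ you can append polynomially many extra bottom phases that build each $\nu_0(x_i)$ by repeated doubling plus bit-additions, using the value $1/K^n$ of $t$ as the unit, and shift the horizon by the corresponding amount. With that change your argument goes through.
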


\noindent
To illustrate this reduction, let us consider the SLP~$\SLP$
of order~$2$:
\[
	x_1 \leftarrow  \max(x_1+x_2, x_2+x_2);
    \quad \quad \quad \quad x_2 \leftarrow  		\max(x_1+x_1,x_1+x_1).
\]
This SLP is \emph{normalized}, that is to say all its $\max$ commands have exactly two arguments $t_1,t_2 \in \mathcal{T}^{+}_n$ and furthermore $t_1,t_2$ have exactly two summands.
 (Note that focusing on normalized SLPs is no loss of generality.) 
We are interested in the $2$-nd power of~$\SLP$ with initial valuation $\valuation(x_1)=0$ and $\valuation(x_2)=1$.
In Figure~\ref{fig:connection}, two copies of $\SLP$ are shown on  the right to 
visualize the concept of powering it.
To obtain an MDP, we consider a set of actions $A = \{a,b\}$ and have
each variable~$x_i$ become a state. In the example, $s$ and $t$ are the corresponding states for~$x_1$ and $x_2$.
The  $t_1,t_2$ arguments of $\max$ commands determine the successors of actions~$a,b$, respectively,
where each successor has probability~$\frac{1}{2}$.
The command
$x_1 \leftarrow \max(x_1+x_2, x_2+x_2)$ translates to $\prob(s,a)(s)=\prob(s,a)(t)=\frac{1}{2}$ and
$\prob(s,b)(s)=1$, as shown in the MDP in Figure~\ref{fig:connection}.
Since $\valuation(x_2)=1$, we make~$t$ a target state.
Now the $i$-th iteration of value iteration of \eqref{eq-rec-synch} (corresponding to the $i$-th step \emph{before} the horizon) is tightly connected to the $i$-th power of the SLP.
Indeed, letting $\valuation_i=\llbracket \SLP\rrbracket^i(\valuation)$,
one can prove that  $\Pv_{=i}(s)= \frac{1}{2^i}\valuation_i(s)$ and $\Pv_{=i}(t)=\frac{1}{2^i} \valuation_i(t)$.

\subparagraph{SLP vs. monotone SLP powering.} It thus remains to provide a
lower bound for the Monotone SLP powering problem. The crucial step, which we
cover in Section~\ref{sec:maintech}, is providing lower bounds for the
non-monotone variant. The remaining step from non-monotone to monotone
powering can be made by adapting the techniques of Allender et al.~\cite{AKM}.

\begin{theorem}
\label{thm:monotone-hard} The powering problem for arbitrary SLPs reduces, in polynomial time, to the powering problem for monotone SLPs.
\end{theorem}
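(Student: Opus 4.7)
The plan is to eliminate negation by representing each variable $x_i$ of the given SLP~$\SLP$ (of order~$n$) by a pair of variables $x_i^+, x_i^-$ in a new monotone SLP~$\SLP'$, preserving the invariant $x_i = x_i^+ - x_i^-$ throughout the execution. Since the powering problem starts with $\valuation \in \Nset^n$, we can initialise $x_i^+ := \valuation(x_i)$ and $x_i^- := 0$, which satisfies the invariant. The horizon~$m$ is kept unchanged, so it suffices to re-implement the one-step function encoded by~$\SLP$ monotonically, plus appending a short comparison gadget explained below.

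For each command $x \leftarrow \max(t_1,\ldots,t_k)$ of~$\SLP$, write the effective coefficient of $x_\ell$ in $t_j$ as $c_{j,\ell}\in[-n,n]$ and split $t_j$ into a ``positive'' and a ``negative'' monotone part,
\[
  t_j^+ \;=\; \sum_{\ell:\,c_{j,\ell}>0} c_{j,\ell}\, x_\ell^+ \;+\; \sum_{\ell:\,c_{j,\ell}<0} |c_{j,\ell}|\, x_\ell^- \;+\; \max(b_j,0),
\]
and symmetrically $t_j^-$ with the superscripts swapped, so that $t_j = t_j^+ - t_j^-$ under the invariant. Both $t_j^{\pm}$ are non-negative combinations of the $x_\ell^{\pm}$ with coefficients of total mass at most~$n$, hence each fits in a single term of $\mathcal{T}^{+}_{n'}$ in the new SLP; we store $t_j^+$ in a fresh variable $u_j$ and $t_j^-$ in a fresh variable~$v_j$. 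The key algebraic identity
\[
  \max_j\bigl(t_j^+ - t_j^-\bigr) \;=\; \max_j\Bigl(t_j^+ + \sum_{\ell\neq j} t_\ell^-\Bigr) \;-\; \sum_\ell t_\ell^-
\]
makes the $\max$ monotonisable: we introduce auxiliaries $w_j \leftarrow \sum_{\ell\neq j} v_\ell$ and $z_j \leftarrow u_j + w_j$ (each a single monotone term, since $k$ is bounded by $|\SLP|$), and finally assign $x^+ \leftarrow \max(z_1,\ldots,z_k)$ together with $x^- \leftarrow \sum_\ell v_\ell$. A direct check gives $x^+ - x^- = \max_j t_j$, so the invariant propagates; we obtain $O(k)$ monotone commands per original command, for an overall blow-up polynomial in $|\SLP|$.

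For the comparison query ``is $\valuation^{(m)}(x_p) \geq \valuation^{(m)}(x_q)$?'' we append at the end of $\SLP'$ two further commands $p \leftarrow x_p^+ + x_q^-$ and $q \leftarrow x_q^+ + x_p^-$ for fresh variables~$p,q$. Since the invariant yields $p - q = x_p - x_q$ at the end of every iteration, the monotone powering query ``$p \geq q$?'' on~$\SLP'$ with horizon~$m$ answers the original question. The main obstacle---resolved by the identity above---is the monotone treatment of~$\max$: the term-wise $(+,-)$ decomposition handles $+$ and unary~$-$ trivially, whereas expressing a $\max$ as a difference of two monotone quantities requires introducing the common ``offset'' $\sum_\ell t_\ell^-$ shared across all branches, in the spirit of Allender et al.~\cite{AKM}.
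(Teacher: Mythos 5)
Your proposal is correct, but it takes a genuinely different route from the paper's proof. You use a dual-rail encoding: every variable $x_i$ is split into a pair $x_i^+,x_i^-$ with the invariant $x_i=x_i^+-x_i^-$, and the only delicate point --- pushing subtraction through $\max$ --- is resolved by the rebalancing identity $\max_j\bigl(t_j^+-t_j^-\bigr)=\max_j\bigl(t_j^++\sum_{\ell\neq j}t_\ell^-\bigr)-\sum_\ell t_\ell^-$, together with a final two-command gadget ($p\leftarrow x_p^++x_q^-$, $q\leftarrow x_q^++x_p^-$) that turns the original query into a monotone comparison. The paper instead follows Allender et al.\ more literally: it adds a \emph{single} offset variable $z$ (so the new SLP has order $n+1$) and maintains the invariant $\llbracket \SLP\rrbracket^m(\nu)(x_i)=\llbracket \SLP'\rrbracket^m(\nu)(x_i)-\llbracket \SLP'\rrbracket^m(\nu)(z)$ with the \emph{same} offset for every variable; since $\max$ commutes with adding a common constant, all $\max$ commands are left untouched, and only addition and subtraction commands are rewritten (doubling $z$, or folding the subtrahend into $z$ and re-synchronizing the remaining variables), after which the two original variables can be compared directly, with no extra gadget. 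The trade-off: the common-offset scheme keeps the order at $n+1$ and makes correctness for $\max$ immediate, at the cost of a global re-synchronization of all variables at each arithmetic command; your per-variable scheme is local, treats $\max$, $+$ and $-$ uniformly term by term, and needs no coordination across variables, at the cost of doubling the variables, $O(k)$ auxiliaries per command, and the explicit $p,q$ gadget. One detail worth making explicit: the command $w_j\leftarrow\sum_{\ell\neq j}v_\ell$ fits into a single term of $\mathcal{T}^{+}_{n'}$ only because the order $n'$ of your new SLP already exceeds the number of terms of any original command (otherwise build that sum by a chain of binary additions); with that remark, the size bound is polynomial and the invariant propagates by induction on the power, so your reduction is sound.
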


\begin{figure}[t]
\begin{center}
\scalebox{.8}{
\begin{tikzpicture}
\draw [dashed, gray] (-.5-14,-.5) -- node[label= right:{\bf ~~~~~~~~~initialization}]{}  (2.5-14,-.5) ;
\node[rectangle,draw=none] at (0-14, -.25)   (s0) {$x_1:0$};
\node[rectangle,draw=none] at (2-14, -.25)   (t0) {$x_2:1$};
\node[rectangle, draw] at (0-14, 2)   (max0) {$max$};
\node[rectangle, draw] at (2-14, 2)   (dax0) {$max$};
\node[rectangle,draw] at (2-14, 1)   (xum0) {$+$};
\node[rectangle,draw] at (0-14, 1)   (sum0) {$+$};

\draw [dashed, gray ](-.5-14,2.5) -- node[label= right:{\bf ~~~~~~~~~$1$-st
power}]{}  (2.5-14,2.5) ;
\node[rectangle,draw=none] at (0-14, 2.75)   (s1) {$x_1$};
\node[rectangle,draw=none] at (2-14, 2.75)   (t1) {$x_2$};
\node[rectangle, draw] at (0-14, 5)   (max1) {$max$};
\node[rectangle, draw] at (2-14, 5)   (dax1) {$max$};
\node[rectangle,draw] at (2-14, 4)   (xum1) {$+$};
\node[rectangle,draw] at (0-14, 4)   (sum1) {$+$};

\draw [dashed, gray] (-.5-14,5.5) -- node[label= right:{\bf ~~~~~~~~~$2$-nd
power}]{}  (2.5-14,5.5) ;
\node[rectangle,draw=none] at (0-14, 5.75)   (s2) {$x_1$};
\node[rectangle,draw=none] at (2-14, 5.75)   (t2) {$x_2$};
%
%

\path[->,thick] (s0) edge (sum0);
\path[->,thick] (t0) edge (sum0);
\path[->,thick] (s0) edge[bend left] (xum0);
\path[->,thick] (s0) edge[bend right] (xum0);
\path[->,thick] (xum0) edge (max0);
\path[->,thick] (sum0) edge (max0);
\path[->,thick] (xum0) edge[bend left] (dax0);
\path[->,thick] (xum0) edge[bend right] (dax0);
\path[->,thick] (max0) edge (s1);
\path[->,thick] (dax0) edge (t1);

\path[->,thick] (s1) edge (sum1);
\path[->,thick] (t1) edge (sum1);
\path[->,thick] (s1) edge[bend left] (xum1);
\path[->,thick] (s1) edge[bend right] (xum1);
\path[->,thick] (xum1) edge (max1);
\path[->,thick] (sum1) edge (max1);
\path[->,thick] (xum1) edge[bend left] (dax1);
\path[->,thick] (xum1) edge[bend right] (dax1);
\path[->,thick] (max1) edge (s2);
\path[->,thick] (dax1) edge (t2);
%

\draw [dashed, gray] (-7.5,-.5) -- node[label= right:{\bf ~~~~~~~~~~$2$-nd
step}]{}  (-4.2,-.5) ;
\node[rectangle,draw=none] at (-7, -.25)   (s0) {$s$};
\node[rectangle,draw=none] at (-5, -.25)  (t0) {$t$};
\node[rectangle, draw] at (-7, 2)   (max0) {$?$};
\node[rectangle, draw] at (-5, 2)   (dax0) {$?$};
\node[rectangle,draw=none] at (-7.6, 1)   (xum0) {$a$};
\node[rectangle,draw=none] at (-6.4, 1)   (sum0) {$b$};
\node[rectangle,draw=none] at (-5, 1)   (dum0) {$a,b$};

\draw [dashed, gray] (-7.5,2.5) -- node[label= right:{\bf ~~~~~~~~~~$1$-st step}]{}  (-4.2,2.5) ;
\node[rectangle,draw=none] at (-7, 2.75)   (s1) {$s$};
\node[rectangle,draw=none] at (-5, 2.75)   (t1) {$t$};
\node[rectangle, draw] at (-7, 5)   (max1) {$?$};
\node[rectangle, draw] at (-5, 5)   (dax1) {$?$};
\node[rectangle,draw=none] at (-7.6, 4)   (xum1) {$a$};
\node[rectangle,draw=none] at (-6.4, 4)   (sum1) {$b$};
\node[rectangle,draw=none] at (-5, 4)   (dum1) {$a,b$};
\draw [dashed, gray] (-7.5,5.5) -- node[label= right:{\bf ~~~~~~~~~~starting step}]{}  (-4.2,5.5) ;
\node[rectangle,draw=none] at (-7, 5.75)   (s2) {$s$};
\node[rectangle,draw=none] at (-5, 5.75)   (t2) {$t$};

\path[->,thick] (xum0) edge node[near end,left]{$\frac{1}{2}$}(s0);
\path[->,thick] (xum0) edge node[near end,right]{$~~~\frac{1}{2}$}(t0);
\path[->,thick] (dum0) edge (s0);
\path[->,thick] (sum0) edge(s0);

\path[dashed,thick] (max0) edge (xum0);
\path[dashed,thick] (max0) edge (sum0);
\path[dashed,thick] (dax0) edge (dum0);
\path[dashed,thick] (s1) edge (max0);
\path[dashed,thick] (t1) edge (dax0);
\path[->,thick] (xum1) edge node[near end,left]{$\frac{1}{2}$}(s1);
\path[->,thick] (xum1) edge node[near end,right]{$~~~\frac{1}{2}$} (t1);
\path[->,thick] (dum1) edge (s1);
\path[->,thick] (sum1) edge(s1);

\path[dashed,thick] (max1) edge (xum1);
\path[dashed,thick] (max1) edge (sum1);
\path[dashed,thick] (dax1) edge (dum1);
\path[dashed,thick] (s2) edge (max1);
\path[dashed,thick] (t2) edge (dax1);

\node[draw=none] at (1, 1)   (s1) {MDP $\mdp$};
\node[state] at (0,2)   (s1) {$s$};
 \node[state] at (2,2)   (s4) {$t$};

\path[->] (s1)   edge  [loop above] node[midway, above, align=center] {$a:\frac{1}{2}$\\$b:1$}  (s1) ;
\path[->] (s1) edge [bend left] node[midway, above] {$a:\frac{1}{2}$} (s4) ;
\path[->] (s4) edge  [bend left] node[midway, below] {} (s1) ;

\end{tikzpicture}
}
\caption{ An example for the translation from SLPs to MDPs.
}
\label{fig:connection}
\end{center}

\end{figure}

\section{Main Reductions}\label{sec:reductions}
\label{sec:maintech}

To show \EXP-hardness of all the problems introduced so far, we introduce a class of counter programs that allow 
linear updates on counters  and show that a (time-)bounded version of the termination problem 
for these  programs is  \EXP-complete.
Finally, we reduce this bounded termination problem to the powering problem.

A \emph{deterministic linear-update counter program} (CP)  consists of $n$
\emph{counters} $\{c_i \st 1 \leq i \leq n\}$, ranging over~$\mathbb{Z}$, and a
sequence of $m$ instructions. We consider instructions of the form
\[
    p : {}  c_1 \leftarrow c_2 + c_3 \quad \quad \quad 
    p : {}  \code{if } c_1 \geq c_2 \code{ goto } t \quad \quad \quad 
    p : {}  c_1 \leftarrow c_2 - c_3
\]
where $1 \leq p < m$ and $1 \leq t \leq m$, and the final instruction is always 
$m : \code{halt}$. 
More precisely, the instructions allow
\begin{enumerate}[(i)]
    \item adding or subtracting two counters, assigning the result to a third
        one, and continuing to the next
        instruction;
    \item  testing two counters against each other,
        and jumping to some given instruction if the result of the test is
        positive, continuing to the next instruction otherwise.
\end{enumerate}
The $\code{halt}$ instruction only loops to itself.

  A \emph{configuration} of a CP is  a tuple $(p,
v_1, \dots, v_n) \in \{1,\dots,m\} \times \mathbb{Z}^n$ consisting of an
instruction~$p$ and values of the counters (e.g., $v_1$ is the value for the
counter~$c_1$). We equip CPs with a fixed initial configuration lying in
$\{1\} \times \mathbb{N}^n$. Given a CP, the {\bf termination problem} asks whether
the $\code{halt}$ instruction is reached.  The {\bf bounded termination problem}
additionally takes as input an integer $N \in \mathbb{N}$, encoded in binary, and
asks whether the $\code{halt}$ instruction is reached within $N$ steps.

The bounded termination problem  is in \EXP:
 in a computation with~$N$ steps, the magnitude of the counters is
bounded by $2^N$, so each step can be simulated in time
exponential in the bitsize of~$N$.  We will now show that the problem is
\EXP-hard already for a certain subclass of  CPs
which facilitates  the reductions to  the powering problem.

\subparagraph{Simple counter programs.}
A CP is \emph{simple} if it satisfies the following conditions.  First, all
values in  all reachable configurations~$(p,v_1,\dots,v_n)$ are non-negative:
$v_i \in \mathbb{N}$ for all $1 \leq i \leq n$ (one may  ``guard''
subtractions by test instructions to achieve this).  Second,  all test
instructions
\(
    q : \text{\code{if} } c_i \geq c_j \text{ \code{goto} } r
\)
use counters $c_1$ and $c_2$ exclusively.  Moreover, for each such
instruction~$q$, there are counters $c_{\bar q_1},c_{\bar q_2}$ such that
in all reachable configurations~$(q,v_1,\dots,v_n)$ we have that 
\begin{enumerate}
    \item $v_1 = a_1 v_{\bar q_1}$ and $v_2 = a_2 v_{\bar q_2}$ with
        $a_1,a_2 \in \{64,64\cdot10,64\cdot12\}$. That is, the values of
        tested counters are ``scaled-up'' versions of the values of other
        counters.
    \item Additionally, the absolute difference of the values of the tested
        counters is larger than the values of all other counters, in symbols
        \(
            |v_1 - v_2| \geq \max\{v_k \st 3 \leq k \leq n \}.
        \)
\end{enumerate}
Note that the class of simple CPs is a semantically defined subclass of all
CPs. Further observe that for every test instruction we necessarily have that
$\bar q_1,\bar q_2 \geq 3$.

The following proposition kick-starts our sequence of reductions.
\begin{proposition}\label{pro:exp-hardness-simple-cms}
    The bounded termination problem for simple CPs is
    \EXP-complete.
\end{proposition}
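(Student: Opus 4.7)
As already noted in the text, the \EXP{} upper bound is immediate: after $k \le N$ steps every counter has magnitude at most $2^k$, so every instruction can be executed by a Turing machine in time polynomial in $N$, giving an overall running time polynomial in $N$ and hence exponential in the bitsize of the input.

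\textbf{Hardness — overall strategy.} For the lower bound I plan to reduce from acceptance of a deterministic single-tape Turing machine $M$ running in time $T(n) = 2^{p(n)}$ on input $w$ of length $n$, a canonical \EXP-complete problem. The reduction will construct, in polynomial time, a simple CP $P$ together with a step bound $N = O(T(n))$ (encoded in binary) such that $P$ reaches \code{halt} within $N$ steps iff $M$ accepts $w$. The tape of $M$ is stored in two ``bulk'' counters $L, R$: $L$ holds the reversed left-hand side of the tape as an integer in a suitable base $b$, and $R$ holds the right-hand side with the currently scanned cell as its low-order digit. The finite control of $M$ is encoded inside the instruction pointer of $P$, so one transition of $M$ becomes a constant-size gadget of CP instructions.

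\textbf{Simulating a step.} A single simulated transition must (i)~identify the scanned symbol, (ii)~branch on the transition rule of $M$, and (iii)~update $L$ and $R$ by the appropriate integer shift and symbol replacement. Because multiplying any counter by a fixed integer constant is a constant-length addition chain, the update in~(iii) is routine with the available linear updates. The delicate part is implementing (i)--(ii) within the simplicity discipline: every test may compare only the designated counters $c_1, c_2$, and at the moment of the test we must have $c_1 = a_1 \cdot u$ and $c_2 = a_2 \cdot v$ for some other counters $u, v$ and multipliers $a_1, a_2 \in \{64, 640, 768\}$, with $|c_1 - c_2|$ strictly dominating all other counter values. My plan is to precede each logical comparison by a short gadget that clears $c_1, c_2$ using stored duplicates, extracts the quantities to be compared into two small scratch counters $u, v$, and then reloads $c_1, c_2$ by short addition chains to the required multiples. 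The admissible multipliers produce effective test ratios $1, 10, 12, 5/6, 6/5, 1/10, 1/12$, which is more than enough to case-split on the symbol under the head and on the transition output.

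\textbf{Main obstacle.} The principal difficulty is simultaneously enforcing all three clauses of simplicity at \emph{every} reachable test instruction, while keeping all counters non-negative. The test-counter invariant is a global semantic condition, so the bulk counters $L, R$ have to be parked out of sight whenever a test is imminent --- for example by performing the symbol extraction first and leaving only the small ``current digit'' inside the test counters, while $L, R$ are stored in slots whose values are smaller than the tested gap. The dominance requirement then forces an additional global multiplicative factor to be baked into the tape encoding so that the gap $|c_1 - c_2|$ strictly exceeds every other counter value at each test. Non-negativity is preserved by performing each subtraction only after an explicit comparison has certified that the minuend is at least the subtrahend, which fits naturally inside the test gadget just described. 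Once a single-step gadget is shown to be simple, the construction iterates it $T(n)$ times and takes $N$ equal to the length of the resulting schematic trace, completing the polynomial-time reduction.
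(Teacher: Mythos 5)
Your overall strategy (reduce from an \EXP-bounded TM, encode the tape into two big counters, fold the finite control into the instruction pointer) is the same as the paper's, and the upper bound is fine. But there is a genuine gap at exactly the point you call ``routine'': with the scanned cell stored as the \emph{low-order} digit of $R$, reading it requires computing $R \bmod b$ and popping it requires dividing $R$ by $b$, and neither modulo nor division by a constant is expressible by a constant-size gadget of linear updates ($c_i \leftarrow c_j \pm c_k$). Any workaround by a comparison loop (repeated subtraction, bit extraction, halving) needs tests between quantities that become arbitrarily close, which destroys the simplicity clause requiring $\abs{v_1 - v_2} \geq \max\{v_k : 3 \le k \le n\}$ at \emph{every} reachable test, and would also exceed your budget $N = O(T(n))$. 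Relatedly, your plan to ``park $L,R$ out of sight'' and compare only small extracted digits in $c_1,c_2$ is incompatible with that same clause: the dominance requirement ranges over \emph{all} counters, so the exponentially large bulk counters cannot be hidden, and no global scaling of the encoding makes a constant-size digit gap exceed them. Finally, ``certify the minuend before subtracting'' reintroduces tests that themselves must satisfy the simplicity discipline, so non-negativity cannot be bought that cheaply.

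The paper's construction resolves precisely these points, and the key idea is the opposite digit-ordering: the stack tops sit in the \emph{most significant} bits of the counters $c_3,c_4$ (3-bit codes \code{110}, \code{101}, \code{100} for $a,b,\square$), both simulated stacks are instrumented to keep equal height $h$, and an auxiliary counter maintains $\ell = 8^{h}/2$ as a pointer to that top block. Then a push is $\ell \leftarrow 8\ell$; $c_3 \leftarrow c_3 + s\ell$; $c_4 \leftarrow 8c_4$, and a pop is $c_{2+i} \leftarrow c_{2+i} - t\ell$; $c_{2+i} \leftarrow 8c_{2+i}$ --- constant-length linear-update gadgets with no division. Reading the top symbol becomes comparing $64\,c_{2+i}$ against $64\cdot 12\cdot\ell$ and $64\cdot 10\cdot\ell$ loaded into $c_1,c_2$; these are full-magnitude quantities, and an encoding lemma (leading bit of every code is $1$, the symbol below the top is always a genuine symbol because two bottom $\square$'s are never popped) shows $\abs{c_{2+i} - t\ell} \geq \ell/4$, hence after scaling by $64$ the gap is at least $16\ell$, which exceeds $c_3, c_4, \ell$ --- this is what makes the scaling factors $\{64, 64\cdot 10, 64\cdot 12\}$ and the dominance condition actually hold, and it also yields non-negativity structurally. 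Without this most-significant-end encoding and the $\ell$-pointer invariant (or an equivalent device), the step-simulation gadget you describe cannot be completed.
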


To prove the proposition, we follow the classical recipe of first simulating a Turing machine
using a machine with two stacks, and then simulating the two-stack machine by a CP.
We note two key differences between our construction and the classical reduction:
($1$) We use the expressiveness of linear updates in CPs 
to simulate pushing and popping on the stack
in a linear
number of steps of the CP.
($2$) We instrument 
the two-stack machine to ensure that the height of the two stacks differs by
at most~$1$ along any computation. This is crucial  to allow us to
simulate the two-stack machine by a \emph{simple} linear-update counter program.


\subsection{From the Termination Problem to the Powering Problem}
\label{sec:slppow-min-hard}

We now sketch the main ideas behind the last (and most technically involved) missing link in our
sequence of reductions.

\begin{theorem}
\label{thm:counter-to-slppow}
The bounded termination problem for simple CPs reduces, in polynomial time, to the powering problem for SLPs.
\end{theorem}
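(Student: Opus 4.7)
The plan is to design, given a simple CP with $n$ counters and $m$ instructions and a bound $N$, an SLP $\SLP$ of polynomial size together with an initial valuation $\valuation \in \mathbb{N}^k$ (where $k$ is the SLP's order) and two designated variables $x, y$, such that one evaluation of $\llbracket \SLP \rrbracket$ simulates one step of the CP and $\llbracket\SLP\rrbracket^N(\valuation)(x) \geq \llbracket\SLP\rrbracket^N(\valuation)(y)$ holds iff the CP halts within $N$ steps. This reduces bounded termination to the powering problem for SLPs.

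First, I would fix the SLP variables: one variable $x_i$ per CP counter $c_i$, one flag variable $f_p$ per CP instruction (maintaining the invariant that exactly one $f_p$ equals $1$ while the rest equal $0$, thus encoding the program counter), and a pool of auxiliary variables for storing intermediate sums, temporaries within a step, and a constant-$0$ handle obtained as $x_i - x_i$. The initial valuation sets $f_1 = 1$, all other flags to $0$, and the $x_i$ to the prescribed initial counter values.

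Second, I would assemble $\SLP$ from $m$ per-instruction blocks of commands, with the invariant that the $p$-th block effectively acts as the identity unless $f_p = 1$. For arithmetic instructions $c_a \leftarrow c_b \pm c_c$, the term $x_b \pm x_c$ is directly expressible in $\mathcal{T}_n$, so the key is to \emph{gate} each candidate update by the corresponding flag. The gating building block is the observation that for $v, f \in \{0,1\}$, one has $f \cdot v = \max(v + f - 1, 0)$; chaining such gadgets and using the absorbing property $\max(\cdot, 0)$ lets us multiplex several possible updates to a single variable via a single composite $\max$-command. In parallel, each block advances the program counter by the update $f_p \leftarrow 0$, $f_{p+1} \leftarrow 1$, gated in the same fashion.

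The main obstacle is the gadget for a test instruction $q: \text{if } c_1 \geq c_2 \text{ goto } t$, where the flag transition must depend on the \emph{sign} of $c_1 - c_2$ and yet produce a clean $\{0,1\}$-valued update to the flags. The natural building blocks $\max(c_1 - c_2, 0)$ and $\max(c_2 - c_1, 0)$ carry the sign information but are polluted in magnitude by the raw counter values. This is precisely where the simplicity hypothesis is decisive: the scaling factors in $\{64, 640, 768\}$ guarantee that the tested counters are known rational multiples of other counters $c_{\bar q_1}, c_{\bar q_2}$, and the dominance inequality $|v_1 - v_2| \geq \max\{v_k : k \geq 3\}$ supplies a large, structured gap. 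My plan is to use these constants to cancel the polluting contribution by an explicit $\mathcal{T}_n$-combination of the $\bar q_i$-counters, leaving a residue whose sign coincides with the sign of $c_1 - c_2$ and whose magnitude is small enough to be normalized back to $\{0,1\}$ by a further $\max$ clamp; the specific choice of scaling constants is what makes such a cancellation expressible with only $\pm 1$ coefficients.

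Finally, to read off the answer, I would take $x = f_m$ and $y$ to be the constant-$0$ variable; by an inductive invariant ``after $k$ iterations the SLP valuation encodes the CP configuration after $\min(k, \text{halt-time})$ steps,'' the CP halts within $N$ steps iff $f_m \geq 1$ holds in $\llbracket \SLP \rrbracket^N(\valuation)$. Since each block has size polynomial in $n$ and each test gadget is of fixed size (independent of $N$), the whole $\SLP$ is of size polynomial in $n + m$, yielding the required polynomial-time reduction.
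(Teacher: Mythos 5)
Your overall architecture (one SLP pass per CP step, per-instruction blocks that are inert unless ``active'', simplicity used to make the test gadget expressible, final comparison of a designated variable against $0$) matches the spirit of the paper's reduction, but there is a genuine gap at the very point that makes this reduction hard: how a block is gated by the program counter. Your gating primitive $f\cdot v=\max(v+f-1,0)$ is only valid when the gated value $v$ is itself in $\{0,1\}$. The quantities you actually need to gate --- the candidate counter updates $x_b\pm x_c$ --- have magnitude up to $2^N$, and with only $\max$, $+$, $-$ and coefficients in $\{-1,0,1\}$ a one-hot Boolean flag cannot suppress such a value: you would need to subtract something of size at least $v$ when the flag is $0$, i.e.\ effectively a product of a large quantity with $(1-f_p)$, which is exactly the kind of multiplication an SLP cannot perform (and ``chaining'' the Boolean gadget does not produce it). The same issue resurfaces inside your test gadget, where the ``residue'' you want to clamp to $\{0,1\}$ is first obtained by a cancellation whose error term must be dominated by something the flag encoding does not provide.

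The paper resolves precisely this difficulty by \emph{not} using a one-hot program counter. Instead it encodes the current instruction $p$ multiplicatively, maintaining auxiliary variables $Q_i$ with the invariant $\nu(Q_i)=p\,\nu(x_i)$ for every counter $i$. Then the masking expression $\min(Q_k-p\,x_k,\;p\,x_k-Q_k)$ is $0$ exactly when the simulated instruction is $p$, and is at most $-\nu(x_k)$ otherwise (since the instruction index differs by at least $1$, the discrepancy automatically scales with $x_k$); this is the content of Lemma~\ref{lem:conditional-commands} and is what lets $\max(0,\,x_k+\text{mask})$ cancel arbitrarily large updates. Simplicity is then used, as you anticipated, for the second conditional (tests), via $|v_1-v_2|\ge\max_k v_k$ and the scaled tested counters. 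To repair your proposal you would have to either adopt this $Q_i=p\,x_i$ encoding (at which point you are reconstructing the paper's proof) or explain how to maintain flag-times-counter products across iterations, which your current plan does not do.
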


\subparagraph{The encoding.}
Given a CP~$\calC$ we  construct an SLP $S$ of order $\geq 2n$
with variables including
$\{x_1,\dots, x_{2n}\}$. 
Let us denote $x_{n+i}$ by $Q_{i}$ for $1\leq i\leq n$.
The reduction is such that a configuration $(p, v_1, \dots, v_n)$ of~$\calC$ is
encoded as a valuation~$\nu:var_{2n} \to \mathbb{Z}$ of the SLP with the
property that
\(
    \nu(x_i)=v_i
\)
and
\(
    \nu(Q_i) = p \nu(x_i) = p v_i
\)
for all $1 \leq i \leq n$.
In this way, the instruction~$p$ of the CP is encoded in the variables of the SLP
(recall that SLPs are stateless). 

Given this encoding, the main challenge is to realize the transition function of
the CP as a function computed by an  SLP. Once this is accomplished, for every
$m\in \mathbb{N}$, the $m$-th power of the SLP~$S$ represents the $m$-step
transition function of the CP.

\subparagraph{Conditional commands.} 
Intuitively, to encode the transition function we would like to equip the SLP with
\emph{conditional commands}, whose execution depends on a conditional.
Specifically, we want to implement the following two kinds of conditional
updates
\[
    (y \leftarrow y \pm x_k \text{ if } Q_k = p x_k) \; \; 
    \text{ and } \; \;
    (Q_k \leftarrow p \cdot x_k \text{ if } x_i \geq x_j)
\] 
in terms of primitive commands of an SLP.  In both commands, if the condition is
not satisfied, the command is not executed, and the value of $y$ or $Q_k$
remains unchanged.
For example, one can simulate the first type of
conditional commands by executing
$y \leftarrow y \pm \max(0,x_k+t)$,
where $t$ is an expression that is $0$ if the test is
passed and less than $-x_k$ otherwise.
Intuitively, we think of $t$ as
``masking'' the assignment if the test fails. 

For the following result, which formalizes how we implement conditional
commands, we call a valuation~$\nu$ \emph{valid} if 
there exists $q\in\{1,\dots,m\}$
with
$\nu(x_i) \geq 0$ and
$\nu(Q_i) = q \nu(x_i)$ for all $1\leq i\leq n$. 

\begin{lemma}\label{lem:conditional-commands}
    Let $p\in \{1,\dots,m\}$ and  $i,j,k\in \{1,\dots, n\}$ be distinct.
    The following equation holds for all
    valid valuations~$\nu$:
    \begin{equation}\label{eq1}
        \max(0, \nu(x_k) + \min(\nu(Q_k) - p\nu(x_k),p\nu(x_k) - \nu(Q_k)))=
        \begin{cases}
            \nu(x_k) & \text{if } \nu(Q_k) = p \nu(x_k)\\
            0 &\text{otherwise.}
        \end{cases}
    \end{equation}
    Moreover, if  $\abs{\nu(x_i)-\nu(x_j)} \geq \nu(x_k)$, 
    then the following holds:
    \begin{equation}\label{eq2}
        \max(0, \nu(x_k) + \min(0, \nu(x_i) - \nu(x_j))) = 
        \begin{cases}
            \nu(x_k) & \text{if } \nu(x_i) \geq \nu(x_j) \\
            0 &\text{otherwise.} 
        \end{cases}
    \end{equation}
\end{lemma}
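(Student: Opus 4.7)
The plan is to prove each equation by direct computation, exploiting the definition of validity for \eqref{eq1} and the hypothesis $|\nu(x_i)-\nu(x_j)|\geq \nu(x_k)$ for \eqref{eq2}. In both cases, the identity $\min(a,-a)=-|a|$ (or the easier $\min(0,a)\in\{0,a\}$) reduces the left-hand side to a piecewise-linear expression in the relevant differences, and a case distinction on whether the condition holds yields the claimed value.

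For \eqref{eq1}, I would first use validity to write $\nu(Q_k)=q\nu(x_k)$ for some $q\in\{1,\dots,m\}$, so that $\nu(Q_k)-p\nu(x_k) = (q-p)\nu(x_k)$ and $p\nu(x_k)-\nu(Q_k) = -(q-p)\nu(x_k)$. Hence the inner $\min$ equals $-|q-p|\cdot \nu(x_k)$, using that $\nu(x_k)\geq 0$ (again from validity). The argument of the outer $\max$ therefore simplifies to $(1-|q-p|)\cdot \nu(x_k)$. If $q=p$, this is $\nu(x_k)\geq 0$, and the $\max$ leaves it unchanged; if $q\neq p$ then $|q-p|\geq 1$, so the argument is non-positive and the $\max$ clamps it to $0$. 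This matches exactly the two branches of the right-hand side.

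For \eqref{eq2}, I would distinguish the two cases on $\nu(x_i)\geq\nu(x_j)$ directly. If $\nu(x_i)\geq \nu(x_j)$, then $\min(0,\nu(x_i)-\nu(x_j))=0$, so the outer argument is $\nu(x_k)\geq 0$ and the $\max$ returns $\nu(x_k)$. Otherwise $\nu(x_i)-\nu(x_j)<0$, and the assumption $|\nu(x_i)-\nu(x_j)|\geq \nu(x_k)$ rewrites as $\nu(x_j)-\nu(x_i)\geq \nu(x_k)$, so $\nu(x_k)+\min(0,\nu(x_i)-\nu(x_j))=\nu(x_k)-(\nu(x_j)-\nu(x_i))\leq 0$, and the $\max$ returns $0$, as required.

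Neither step is a real obstacle; the only mild subtlety is keeping track of the sign conventions and making sure validity is invoked in the right places (namely $\nu(x_k)\geq 0$ to drop the $\max$ in the ``on'' branch of \eqref{eq1}, and the integrality of $q-p$ to guarantee $|q-p|\geq 1$ whenever $q\neq p$, which prevents the ``off'' branch from being positive). The hypothesis $|\nu(x_i)-\nu(x_j)|\geq \nu(x_k)$ in \eqref{eq2} plays the exact analogue of that integrality gap, and the point where I would be most careful is to note that this hypothesis is essential: without it, a small negative difference could fail to mask $\nu(x_k)$ and the identity would break.
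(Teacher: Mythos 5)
Your proof is correct and follows essentially the same route as the paper's: validity gives $\nu(Q_k)=q\nu(x_k)$ with $\nu(x_k)\geq 0$, the inner $\min$ collapses to $-|q-p|\,\nu(x_k)$ (resp.\ is controlled by the hypothesis $|\nu(x_i)-\nu(x_j)|\geq\nu(x_k)$), and the outer $\max$ clamps the non-positive branch to $0$. The only nuance is that your case split on $q=p$ versus $q\neq p$ differs from the right-hand side's split on $\nu(Q_k)=p\nu(x_k)$ in the degenerate case $\nu(x_k)=0$, but there both sides equal $0$, so the argument goes through unchanged.
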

\begin{proof}
    The  equations  follow directly  from the assumption that  $\nu$ is valid,
    since  if $\nu(Q_j)\neq p\nu(x_j)$ then we also have
    $\abs{\nu(Q_j)-p\nu(x_j)} \geq \nu(x_j)$. In addition, if
    $\abs{\nu(x_i)-\nu(x_j)} \geq \nu(x_k)$ and $\nu(x_i) < \nu(x_j)$,
    we will have $\nu(x_k) + \nu(x_i) - \nu(x_j) \leq
    0$.
\end{proof}

Using the property that the simulated program is simple,
Equation~\eqref{eq1} can be used to simulate the conditional update $(y
\leftarrow y \pm x_k \text{ if } Q_k = p x_k)$ where $t=\min(Q_k - px_k,px_k -
Q_k)$ masks the update. Likewise, Equation~\eqref{eq2} can be used to simulate  the second type of
conditional update $(Q_k \leftarrow p \cdot x_k \text{ if } x_i \geq x_j)$ where
the masking expression is $t = \min(0,x_i - x_j)$.
Finally, the multiplication-by-a-constant required for the second type 
of the conditional update is achieved via repeated addition.

\subparagraph{Encoding the instructions.}
We recall that we encode being at the instruction~$p$ of the CP
by a valuation $\nu$ such that $\nu(Q_i)=p\nu(x_i)$ for all $1\leq i\leq n$.

Using the aforementioned conditional commands, we can
construct the SLP $S$ as the 
composition of $m$
smaller SLPs.
Each sub-SLP  $\pi_p$ 
simulates an instruction $p$
from the given
CP $\calC$.
Hence $S$,
when applied upon a valid valuation $\nu$ 
(i.e., a properly-encoded configuration
of $\calC$), simulates all of its instructions 
at once. 
By using conditional commands,
we make sure that
only one sub-SLP results in 
a non-zero update: executing $\pi_p$
has no effect on the valuation unless $p\nu(x_i)=\nu(Q_i)$
for all~$1 \leq i \leq n$.

In this way, powering $S$ allows us to simulate
consecutive steps of $\calC$. In particular, for all $N \in \mathbb{N}$ we have that
\(
    \llbracket S \rrbracket^N(\nu)(Q_1)
    \geq 
    m \cdot \llbracket S \rrbracket^N(\nu)(x_1),
\)
where $m$ is the $halt$ instruction, holds
if and only if $\calC$ halts after at most $N$ steps.

\section{Conclusion}

By the virtue of our chain of reductions (see Figure~\ref{fig:chain}), we get the following theorem.

\begin{theorem}
\label{thm:summary}
All the following problems are \EXP-complete:
\begin{itemize}
\item The finite-horizon reward problem for MDPs, and thus also the \textsc{ValIt} problem.
\item The finite-horizon reachability and synchronized reachability problems for MDPs.
\item The powering problem for SLPs and for monotone SLPs.
\item The bounded termination problem for simple counter programs.
\end{itemize}
\end{theorem}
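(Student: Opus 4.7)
The plan is to combine polynomial-time transitivity of the reductions established in the preceding sections with straightforward EXP upper bounds for each of the listed problems. For membership in EXP: the bounded termination problem for simple CPs lies in EXP because in $N$ steps the magnitude of every counter remains bounded by $2^{O(N)}$, so each step is simulable in time polynomial in $N$ and the program size (as already noted before \Cref{pro:exp-hardness-simple-cms}); the powering problem for (monotone) SLPs lies in EXP because values produced by $m$-fold composition have bitsize polynomial in $m$ and the SLP size, so $\llbracket \SLP \rrbracket^m(\nu)$ can be computed by $m$ explicit passes through $\SLP$; and the three finite-horizon MDP problems lie in EXP by straightforward value iteration, unfolding recurrence~\eqref{eq:intro-bellman} or~\eqref{eq-rec-synch} for $H$ steps while storing polynomially-many rationals of polynomially-bounded bitsize per iteration.

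For EXP-hardness I assemble the chain depicted in Figure~\ref{fig:chain}. Starting from \Cref{pro:exp-hardness-simple-cms}, which provides EXP-hardness of bounded termination for simple CPs, I apply \Cref{thm:counter-to-slppow} to transfer hardness to the SLP powering problem, then \Cref{thm:monotone-hard} to the powering problem for monotone SLPs. \Cref{thm:power-to-sync} next yields EXP-hardness of the finite-horizon synchronized reachability problem. From this central hub, \Cref{th-sync-to-rew} delivers EXP-hardness of the finite-horizon reward problem (and hence of \textsc{ValIt}), while \Cref{th-finite-reach} delivers EXP-hardness of the finite-horizon reachability problem. Since polynomial-time many-one reductions compose and EXP is closed under such reductions, every step preserves hardness and matches the corresponding EXP upper bound, giving EXP-completeness across the board.

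The assembly itself is routine transitivity, so there is no genuine obstacle at this stage; the real technical content has already been isolated in the constituent results. In particular, the two heaviest pieces are \Cref{thm:counter-to-slppow}, where the simplicity conditions on the CP are exploited through the masking lemma (\Cref{lem:conditional-commands}) to simulate counter-program tests via $\max/+/-$-gates, and \Cref{thm:power-to-sync}, which converts the iterated piecewise-affine dynamics of a monotone SLP into synchronized reachability probabilities under a uniform $\tfrac{1}{2}$-splitting. The elimination of subtraction via \Cref{thm:monotone-hard} and the gadgetry of \Cref{th-sync-to-rew,th-finite-reach} are what allow these two technical reductions to propagate all the way to value iteration and to plain finite-horizon reachability, completing the proof.
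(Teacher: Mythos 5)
Your proposal is correct and follows essentially the same route as the paper: the paper likewise obtains \Cref{thm:summary} by composing the reduction chain of Figure~\ref{fig:chain} (\Cref{pro:exp-hardness-simple-cms}, \Cref{thm:counter-to-slppow}, \Cref{thm:monotone-hard}, \Cref{thm:power-to-sync}, \Cref{th-sync-to-rew}, \Cref{th-finite-reach}) with the exponential-time upper bounds already noted for each problem (explicit simulation of the CP, explicit evaluation of the powered SLP, and unfolding value iteration for the MDP problems). No gaps; the assembly is exactly the routine transitivity argument the paper intends.
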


The exact complexity of the following variant of the problem remains
open: given an MDP and a horizon encoded in binary,
determine whether there exists a policy achieving some given
expected-reward threshold (with no restriction on the actions used to do so).

\clearpage
\bibliography{refs}

\begin{thebibliography}{10}

\bibitem{NIPS2004_2569}
Pieter Abbeel and Andrew~Y. Ng.
\newblock Learning first-order markov models for control.
\newblock In L.~K. Saul, Y.~Weiss, and L.~Bottou, editors, {\em Advances in
  Neural Information Processing Systems 17}, pages 1--8. MIT Press, 2005.
\newblock URL:
  \url{http://papers.nips.cc/paper/2569-learning-first-order-markov-models-for-control.pdf}.

\bibitem{abd14}
Eric Allender, Nikhil Balaji, and Samir Datta.
\newblock Low-depth uniform threshold circuits and the bit-complexity of
  straight line programs.
\newblock In Erzs{\'{e}}bet Csuhaj{-}Varj{\'{u}}, Martin Dietzfelbinger, and
  Zolt{\'{a}}n {\'{E}}sik, editors, {\em Mathematical Foundations of Computer
  Science 2014 - 39th International Symposium, {MFCS} 2014, Budapest, Hungary,
  August 25-29, 2014. Proceedings, Part {II}}, volume 8635 of {\em Lecture
  Notes in Computer Science}, pages 13--24. Springer, 2014.
\newblock URL: \url{https://doi.org/10.1007/978-3-662-44465-8_2}, \href
  {http://dx.doi.org/10.1007/978-3-662-44465-8_2}
  {\path{doi:10.1007/978-3-662-44465-8_2}}.

\bibitem{allender-numerical}
Eric Allender, Peter B{\"u}rgisser, Johan Kjeldgaard-Pedersen, and Peter~Bro
  Miltersen.
\newblock On the complexity of numerical analysis.
\newblock {\em SIAM Journal on Computing}, 38(5):1987--2006, 2009.

\bibitem{AKM}
Eric Allender, Andreas Krebs, and Pierre McKenzie.
\newblock Better complexity bounds for cost register automata.
\newblock {\em Theory of Computing Systems}, pages 1--19, 2017.

\bibitem{ModCheckPrinciples08}
Christel Baier and Katoen Joost-Pieter, editors.
\newblock {\em Principles of Model Checking}.
\newblock MIT Press, 2008.

\bibitem{bellman2013dynamic}
Richard Bellman.
\newblock {\em Dynamic Programming}.
\newblock Princeton University Press, 1957.

\bibitem{dec-mdp}
Daniel~S Bernstein, Robert Givan, Neil Immerman, and Shlomo Zilberstein.
\newblock The complexity of decentralized control of {M}arkov decision
  processes.
\newblock {\em Mathematics of Operations Research}, 27(4):819--840, 2002.

\bibitem{Bertsekas:1987:DPD:26970}
Dimitri~P. Bertsekas.
\newblock {\em Dynamic Programming: Deterministic and Stochastic Models}.
\newblock Prentice-Hall, Inc., Upper Saddle River, NJ, USA, 1987.

\bibitem{bertsekas2005dynamic}
Dimitri~P. Bertsekas.
\newblock {\em Dynamic Programming and Optimal Control}.
\newblock Athena scientific Belmont, MA, 2005.

\bibitem{blondel2000survey}
Vincent~D Blondel and John~N Tsitsiklis.
\newblock A survey of computational complexity results in systems and control.
\newblock {\em Automatica}, 36(9):1249--1274, 2000.

\bibitem{bauerle2011finance}
Nicole Bäuerle and Ulrich Rieder.
\newblock {\em Markov Decision Processes with Applications to Finance}.
\newblock Springer-Verlag Berlin Heidelberg, 2011.

\bibitem{ModCheckHB18}
Edmund~M. Clarke, Thomas~A. Henzinger, Helmut Veith, and Bloem Roderick,
  editors.
\newblock {\em Handbook of Model Checking}.
\newblock Springer International Publishing, 2018.

\bibitem{Petri-elementary}
Wojciech Czerwinski, Slawomir Lasota, Ranko Lazic, J{\'{e}}r{\^{o}}me Leroux,
  and Filip Mazowiecki.
\newblock The reachability problem for petri nets is not elementary (extended
  abstract).
\newblock {\em CoRR}, abs/1809.07115, 2018.
\newblock URL: \url{http://arxiv.org/abs/1809.07115}, \href
  {http://arxiv.org/abs/1809.07115} {\path{arXiv:1809.07115}}.

\bibitem{DBLP:journals/jair/DaiMWG11}
Peng Dai, Mausam, Daniel~S. Weld, and Judy Goldsmith.
\newblock Topological value iteration algorithms.
\newblock {\em J. Artif. Intell. Res.}, 42:181--209, 2011.
\newblock URL: \url{http://jair.org/papers/paper3390.html}.

\bibitem{dms14}
Laurent Doyen, Thierry Massart, and Mahsa Shirmohammadi.
\newblock Limit synchronization in {M}arkov decision processes.
\newblock In Anca Muscholl, editor, {\em Foundations of Software Science and
  Computation Structures - 17th International Conference, {FOSSACS} 2014, Held
  as Part of the European Joint Conferences on Theory and Practice of Software,
  {ETAPS} 2014, Grenoble, France, April 5-13, 2014, Proceedings}, volume 8412
  of {\em Lecture Notes in Computer Science}, pages 58--72. Springer, 2014.
\newblock URL: \url{https://doi.org/10.1007/978-3-642-54830-7_4}, \href
  {http://dx.doi.org/10.1007/978-3-642-54830-7_4}
  {\path{doi:10.1007/978-3-642-54830-7_4}}.

\bibitem{Fearnley15stoc}
John Fearnley and Rahul Savani.
\newblock The complexity of the simplex method.
\newblock In {\em Proceedings of the Forty-seventh Annual ACM Symposium on
  Theory of Computing}, STOC '15, pages 201--208, New York, NY, USA, 2015. ACM.
\newblock URL: \url{http://doi.acm.org/10.1145/2746539.2746558}, \href
  {http://dx.doi.org/10.1145/2746539.2746558}
  {\path{doi:10.1145/2746539.2746558}}.

\bibitem{goldsmith1997complexity}
Judy Goldsmith, Michael~L Littman, and Martin Mundhenk.
\newblock The complexity of plan existence and evaluation in probabilistic
  domains.
\newblock In {\em Proceedings of the Thirteenth conference on Uncertainty in
  artificial intelligence (UAI'97)}, pages 182--189. Morgan Kaufmann Publishers
  Inc., 1997.

\bibitem{GMcomplexity}
Judy Goldsmith and Martin Mundhenk.
\newblock Complexity issues in {M}arkov decision processes.
\newblock In {\em Proceedings of the 13th Annual {IEEE} Conference on
  Computational Complexity, Buffalo, New York, USA, June 15-18, 1998}, pages
  272--280. {IEEE} Computer Society, 1998.
\newblock URL: \url{https://doi.org/10.1109/CCC.1998.694621}, \href
  {http://dx.doi.org/10.1109/CCC.1998.694621}
  {\path{doi:10.1109/CCC.1998.694621}}.

\bibitem{greenlaw1995limits}
R.~Greenlaw, H.J. Hoover, and W.L. Ruzzo.
\newblock {\em Limits to Parallel Computation: P-completeness Theory}.
\newblock Oxford University Press, 1995.
\newblock URL: \url{https://books.google.fr/books?id=YZHnCwAAQBAJ}.

\bibitem{hab02}
William Hesse, Eric Allender, and David A.~Mix Barrington.
\newblock Uniform constant-depth threshold circuits for division and iterated
  multiplication.
\newblock {\em J. Comput. Syst. Sci.}, 65(4):695--716, 2002.
\newblock URL: \url{https://doi.org/10.1016/S0022-0000(02)00025-9}, \href
  {http://dx.doi.org/10.1016/S0022-0000(02)00025-9}
  {\path{doi:10.1016/S0022-0000(02)00025-9}}.

\bibitem{littman1997probplan}
Michael~L Littman.
\newblock Probabilistic propositional planning: Representations and complexity.
\newblock In {\em AAAI'97}, pages 748--754, 1997.

\bibitem{DBLP:conf/uai/LittmanDK95}
Michael~L. Littman, Thomas~L. Dean, and Leslie~Pack Kaelbling.
\newblock On the complexity of solving {M}arkov decision problems.
\newblock In Philippe Besnard and Steve Hanks, editors, {\em {UAI} '95:
  Proceedings of the Eleventh Annual Conference on Uncertainty in Artificial
  Intelligence, Montreal, Quebec, Canada, August 18-20, 1995}, pages 394--402.
  Morgan Kaufmann, 1995.
\newblock URL:
  \url{https://dslpitt.org/uai/displayArticleDetails.jsp?mmnu=1\&smnu=2\&article\_id=457\&proceeding\_id=11}.

\bibitem{littman1998jair}
Michael~L Littman, Judy Goldsmith, and Martin Mundhenk.
\newblock The computational complexity of probabilistic planning.
\newblock {\em Journal of Artificial Intelligence Research}, 9:1--36, 1998.

\bibitem{DBLP:conf/uai/MansourS99}
Yishay Mansour and Satinder~P. Singh.
\newblock On the complexity of policy iteration.
\newblock In Kathryn~B. Laskey and Henri Prade, editors, {\em {UAI} '99:
  Proceedings of the Fifteenth Conference on Uncertainty in Artificial
  Intelligence, Stockholm, Sweden, July 30 - August 1, 1999}, pages 401--408.
  Morgan Kaufmann, 1999.
\newblock URL:
  \url{https://dslpitt.org/uai/displayArticleDetails.jsp?mmnu=1\&smnu=2\&article\_id=192\&proceeding\_id=15}.

\bibitem{Petri-STOC}
Ernst~W. Mayr.
\newblock An algorithm for the general petri net reachability problem.
\newblock In {\em Proceedings of the thirteenth annual ACM Symposium on Theory
  of computing (STOC'81)}, pages 238--246. ACM, 1981.

\bibitem{Mundhenk-fh-survey}
Martin Mundhenk, Judy Goldsmith, Christopher Lusena, and Eric Allender.
\newblock Complexity of finite-horizon {M}arkov decision process problems.
\newblock {\em J. ACM}, 47(4):681--720, July 2000.
\newblock URL: \url{http://doi.acm.org/10.1145/347476.347480}, \href
  {http://dx.doi.org/10.1145/347476.347480} {\path{doi:10.1145/347476.347480}}.

\bibitem{pt87}
Christos~H. Papadimitriou and John~N. Tsitsiklis.
\newblock The complexity of {M}arkov decision processes.
\newblock {\em Math. Oper. Res.}, 12(3):441--450, 1987.
\newblock URL: \url{https://doi.org/10.1287/moor.12.3.441}, \href
  {http://dx.doi.org/10.1287/moor.12.3.441} {\path{doi:10.1287/moor.12.3.441}}.

\bibitem{py86}
Christos~H. Papadimitriou and Mihalis Yannakakis.
\newblock A note on succinct representations of graphs.
\newblock {\em Information and Control}, 71(3):181--185, 1986.
\newblock URL: \url{https://doi.org/10.1016/S0019-9958(86)80009-2}, \href
  {http://dx.doi.org/10.1016/S0019-9958(86)80009-2}
  {\path{doi:10.1016/S0019-9958(86)80009-2}}.

\bibitem{puterman05}
Martin~L. Puterman.
\newblock {\em {Markov} Decision Processes}.
\newblock Wiley-Interscience, 2005.

\bibitem{vicav18}
Tim Quatmann and Joost-Pieter Katoen.
\newblock Sound value iteration.
\newblock In Hana Chockler and Georg Weissenbacher, editors, {\em Computer
  Aided Verification}, pages 643--661. Springer International Publishing, 2018.

\bibitem{schal2002markov}
Manfred Sch{\"a}l.
\newblock Markov decision processes in finance and dynamic options.
\newblock In {\em Handbook of Markov Decision Processes}, International Series
  in Operations Research \& Management Science, pages 461--487. Springer, 2002.

\bibitem{sidford18soda}
Aaron Sidford, Mengdi Wang, Xian Wu, and Yinyu Ye.
\newblock {\em Variance Reduced Value Iteration and Faster Algorithms for
  Solving {M}arkov Decision Processes}, pages 770--787.
\newblock SIAM, 2018.
\newblock URL: \url{https://epubs.siam.org/doi/abs/10.1137/1.9781611975031.50},
  \href
  {http://arxiv.org/abs/https://epubs.siam.org/doi/pdf/10.1137/1.9781611975031.50}
  {\path{arXiv:https://epubs.siam.org/doi/pdf/10.1137/1.9781611975031.50}},
  \href {http://dx.doi.org/10.1137/1.9781611975031.50}
  {\path{doi:10.1137/1.9781611975031.50}}.

\bibitem{sigaud2013markov}
Olivier Sigaud and Olivier Buffet.
\newblock {\em Markov Decision Processes in Artificial Intelligence}.
\newblock John Wiley \& Sons, 2013.

\bibitem{sutton2018reinforcement}
R.S. Sutton and A.G Barto.
\newblock {\em Reinforcement Learning: An Introduction}.
\newblock Adaptive Computation and Machine Learning. MIT Press, 2018.

\bibitem{VINetworks16}
Aviv Tamar, YI~WU, Garrett Thomas, Sergey Levine, and Pieter Abbeel.
\newblock Value iteration networks.
\newblock In D.~D. Lee, M.~Sugiyama, U.~V. Luxburg, I.~Guyon, and R.~Garnett,
  editors, {\em Advances in Neural Information Processing Systems 29}, pages
  2154--2162. Curran Associates, Inc., 2016.
\newblock URL:
  \url{http://papers.nips.cc/paper/6046-value-iteration-networks.pdf}.

\bibitem{tseng}
Paul Tseng.
\newblock Solving {H}-horizon, stationary {M}arkov decision problems in time
  proportional to log({H}).
\newblock {\em Operations Research Letters}, 9(5):287--297, 1990.

\bibitem{DBLP:conf/ecai/WuHGZL16}
Zhimin Wu, Ernst~Moritz Hahn, Akin G{\"{u}}nay, Lijun Zhang, and Yang Liu.
\newblock {GPU}-accelerated value iteration for the computation of reachability
  probabilities in mdps.
\newblock In Gal~A. Kaminka, Maria Fox, Paolo Bouquet, Eyke H{\"{u}}llermeier,
  Virginia Dignum, Frank Dignum, and Frank van Harmelen, editors, {\em {ECAI}
  2016 - 22nd European Conference on Artificial Intelligence}, pages
  1726--1727. {IOS} Press, 2016.
\newblock URL: \url{https://doi.org/10.3233/978-1-61499-672-9-1726}, \href
  {http://dx.doi.org/10.3233/978-1-61499-672-9-1726}
  {\path{doi:10.3233/978-1-61499-672-9-1726}}.

\bibitem{ye2005new}
Yinyu Ye.
\newblock A new complexity result on solving the {M}arkov decision problem.
\newblock {\em Mathematics of Operations Research}, 30(3):733--749, 2005.

\bibitem{ye2011simplex}
Yinyu Ye.
\newblock The simplex and policy-iteration methods are strongly polynomial for
  the {M}arkov decision problem with a fixed discount rate.
\newblock {\em Mathematics of Operations Research}, 36(4):593--603, 2011.

\end{thebibliography}

\clearpage
\appendix
\begin{center}
{\LARGE Technical Appendix}
\end{center}
\section{Proof of Theorem~\ref{th-sync-to-rew}}
\label{app:sync-to-rew}
In this section, we will show that there is a polynomial-time many-one reduction from the finite-horizon synchronized-reachability problem to the finite-horizon reward problem --- for any discount factor $\gamma$.
\begin{proof}[Proof of Theorem~\ref{th-sync-to-rew}]
    Consider an instance of the finite-horizon synchronized-reachability
    problem, i.e., an MDP
    $\mdp=(\states,\actions,P,\cdot,\gamma)$, an initial state $s_0$, a target vertex $t$, and horizon $H$.
    In polynomial time we
    can construct a new MDP $\mdp'=(\states',\actions,P',R,\gamma)$ with
    an initial state $s'_0$ and horizon $H'$ such that:
    \begin{itemize}
        \item $\states' = \states\times\{\even,\odd\}$;
        \item $P'$ is such that for each $s,s'\in\states$ and
            $a\in\actions$ we have:
            \(
            P'((s,\even),a)(s',\odd) = P(s,a)(s')
            \)
            and
            \(
            P'((s,\odd),a)(s,\even) = 1;
            \)
            for all other arguments, $P'(\cdot,\cdot)(\cdot)$ returns zero;
        \item finally, $s_0' = (s_0,\even)$.
    \end{itemize}
    For $\mdp'$ we construct (also in polynomial time), the reward function 
    $\rewards$ as follows: for each $(s,a)\in \states\times\actions$ such that 
    $s\neq t$ we have $\rewards((s,\even),a)=0$ and
    $\rewards((s,\odd),a)=1/\gamma$. If $s = t$,
    then $\rewards((s,\even),a)=1$ and $\rewards((s,\odd),a)=0$. This ensures 
    that under any strategy $\sigma'$ in $\mdp'$ the
    $i$-step (expected) value, which we denote by 
    $\Rv_{i,\sigma'}(s_0)$,
    is equal to
    \[
    	\frac{(1-\gamma^i)}{1 - \gamma^2}
    \]
    if $i$ is even, 
    and otherwise it is equal to 
    \[
        \frac{(1-\gamma^{i-1})}{1 - \gamma^2}
        + \gamma^{i-1} \cdot
        \probm_{s'_0,\sigma'}(\{s'_0,\dots,s'_{i-1}:
        s'_{i-1} = t'\}),
    \]
    where $t'=(t,\even)$.  Finally, we set $H' = 2H + 1$.

    Intuitively, $\mdp'$ is formed by subdividing each probabilistic transition
    into two transitions by using newly added ``middle'' states (those of the
    form $\states\times\{\odd \}$). Hence, there is a one-to-one correspondence
    between runs of some length $\ell$ in $\mdp$ and finite paths of
    length $2 \ell$ in $\mdp'$. The correspondence naturally extends to 
    sets of runs and strategies (since there is no real choice in states of the
    form $\states\times\{\odd \}$). Moreover, under corresponding strategies,
    the probabilities of corresponding sets of runs are identical.
    Hence, for all $0 \leq p \leq 1$ and all $j \in \mathbb{N}$
    there exists a strategy $\sigma$ in $\mdp$ such that
    $\probm_{s_0,\sigma}(\{s_0,\dots,s_j:s_j =t\}) \geq p$ in $\mdp$
    if and only if there is a
    strategy $\sigma'$ in $\mdp'$ such that
    $\probm_{s'_0,\sigma'}(\{s'_0,\dots,s_{2j}:s_{2j} = t'\}) \geq p$
    in $\mdp'$.
    But by the discussion in
    the previous paragraph, such a strategy exists if and only if there is
    $\sigma''$ in $\mathcal{M}'$ such that 
    \[
        \Rv_{2j+1,\sigma''}(s'_0) \geq
        \frac{(1-\gamma^{2j})}{1 - \gamma^2}
        + \gamma^{2j} \cdot p.
    \]
    It follows that an action $a$ is an optimal first action for the finite-horizon synchronized-reachability problem in $\mathcal{M}$ (with horizon $H$) if and only if it is an optimal first action for the finite-horizon reward problem in $\mathcal{M}'$ (with horizon $H'$).
\end{proof}

\section{Proof of Theorem~\ref{th-finite-reach}}
\label{app:finite-reach}

We show that there is a polynomial-time many-one reduction from the
    finite-horizon synchronized-reachability problem to the
    finite-horizon reachability problem.

\begin{proof}[Proof of Theorem~\ref{th-finite-reach}]
    Given an instance $(\mdp,t,s_0,H)$ of the finite-horizon
    synchronized-reachability problem  --- where $t$ is the target state,
    $s_0$ is an initial state, and $H$ is a binary-encoded horizon ---  we
    compute an instance $(\mdp',\{g\},s_0,H+1)$ of the finite-horizon
    reachability problem where the
    MDP~$\mdp'$ is a modification of~$\mdp$. In particular, $g,b$ (standing for
    ``good'' and ``bad'', respectively) are two fresh states, both sinks
    in~$\mdp'$.  For all other states and all actions~$a$ we define a modified
    transition function~$P'$: with probability~$\frac{1}{3}$ action~$a$
    leads to the new target~$g$, and with probability~$\frac{2}{3}$ it does
    whatever it did in~$\mdp$; formally, we put $P'(s,a)(g)=
    \frac{1}{3}$ and $P'(s,a)(s')=\frac{2}{3} P(s,a)(s')$
    for all states $s,s'$ in~$\mdp$. Finally, we introduce a fresh action, $f$,
    which leads to $g$ and $b$, both with probability $\frac{1}{2}$, when played
    from~$t$; formally, we put $P'(t,f)(g) = P'(t,f)(b) =
    \frac{1}{2}$ and $P'(s,f)(b) = 1$ for all $s
    \neq t$. Observe that from state~$t$ and at time~$H$ it is best to
    play~$f$, as $\frac{1}{2} > \frac{1}{3}$. Strictly before time~$H$, any
    optimal strategy does not play~$f$ --- regardless of the current state ---
    as playing two different consecutive actions instead leads to~$g$ with
    probability at least $1 - (\frac{2}{3})^2 = \frac{5}{9} > \frac{1}{2}$;
    rather, any optimal strategy maximizes the probability of reaching
    state~$t$ at time~$H$ (because then it is beneficial to play~$f$). It
    follows that any optimal strategy (for synchronization) in~$\mdp$
    corresponds naturally to an optimal strategy (for reachability)
    in~$\mdp'$, and vice versa.  Specifically, the first action played in the
    two strategies is the same.  This gives the reduction.
\end{proof}

\section{Proof of Theorem~\ref{thm:monotone-hard}}\label{app:monotone-hard}

We  show how to remove all but one last subtraction --- which is then subsumed by the comparison --- in a Max-Plus-Minus SLP.

\begin{proof}[Proof of Theorem~\ref{thm:monotone-hard}]
 Let $S$ be a (general) SLP of order $n$. Without loss of generality, 
we suppose $S$ consists only of (binary) addition, subtraction, and $\max$ commands.
    To eliminate subtractions, we closely follow the proof of Theorem $8$ in
    Allender et al.~\cite{AKM}. We construct a monotone SLP $S'$ of order $n+1$
    from $S$ by first introducing a new variable $z$
    that will help us maintain the invariant that for all $1 \leq i \leq n$
    and all $1 \leq m
    \leq N$ we have $\llbracket S \rrbracket^m(\nu)(x_i) = \llbracket S'
    \rrbracket^m(\nu)(x_i) - \llbracket S' \rrbracket^m(\nu)(z)$.

    We proceed as follows: First we initialize $z \leftarrow 0$. Whenever we
    encounter a command $x_k \leftarrow x_i \pm x_j$ in $S$, we replace this
    with the following sequence of commands in $S'$.
    \par\medskip\noindent
    \begin{minipage}{0.4\textwidth}
        \begin{equation}\label{eqn:add-coms}\tag{Add}
        \begin{aligned}
            x_k &\leftarrow x_i + x_j \\
            x_{\ell} &\leftarrow x_{\ell} + z,  \ \forall \ell \neq k \\
            z &\leftarrow z + z
        \end{aligned}
        \end{equation}
    \end{minipage}
    \hfill
    \begin{minipage}{0.4\textwidth}
        \begin{equation}\label{eqn:sub-coms}\tag{Sub}
        \begin{aligned}
            x_k &\leftarrow x_i + z \\
            z &\leftarrow z + x_j \\
            x_{\ell} &\leftarrow x_{\ell} + x_j, \ \forall \ell \neq j,k\\
            x_j &\leftarrow x_j + x_j
        \end{aligned}
        \end{equation}
    \end{minipage}
    \par\medskip\noindent
    For addition commands, i.e. if $x_k \leftarrow x_i + x_j$, replace
    it by the sequence of commands~\eqref{eqn:add-coms}; for subtraction
    commands, i.e. if $x_k \leftarrow x_i - x_j$, by the sequence of
    commands~\eqref{eqn:sub-coms}.

    We leave the $\max$ commands in the SLP $S$ unchanged in $S'$. Notice that
    the sequence of commands above are monotone (i.e. they include no
    subtraction commands), and the number of commands used in $S'$ is at most
    $(n+3)$ times the number of commands used in $S$ plus an additional command
    to initialize $z$. To complete the proof, it suffices to note that for
    every command in $S$, the corresponding sequence of $\{\max,+\}$ commands in
    $S'$ satisfy the invariant that the value of every variable $x_i$ for all $1
    \leq i \leq n$ of $S$,  is obtained as the difference of variables $x_i$ and
    $z$ in $S'$.
\end{proof}

\section{Proof of Proposition~\ref{pro:exp-hardness-simple-cms}}
\label{app:CP-hard}

\begin{figure}[t]
\begin{center}

\begin{tikzpicture}[stack/.style={rectangle split,
rectangle split parts=#1,draw, anchor=center}]

\node[stack=7] at (0,0) { \code{110}
\nodepart{two} \code{101}
\nodepart{three}\code{110}
\nodepart{four} \code{110}
\nodepart{five} \code{100}
\nodepart{six} \code{100}
\nodepart{seven} $\times$
};
\node[draw=none] at (0,-2){counter~$c_3$ storing stack~$s_1$};
\draw[<->,dotted] (-.5,1.6) -- 
  node[rotate=90,above] {height $=7$} (-.5,-1.6);

\node[stack=7] at (5,0) { \code{110}
\nodepart{two} \code{110}
\nodepart{three}\code{100}
\nodepart{four} \code{100}
\nodepart{five} $\times$
\nodepart{six} $\times$
\nodepart{seven} $\times$
};
\node [draw=none] at (5,-2){counter~$c_4$ storing stack~$s_2$};
\draw[<->,dotted] (5.5,1.6) --
  node[rotate=90,below] {height $=7$} (5.5,-1.6);

\node[draw=none] at (2.5,1.2){pointer~$\ell$};
\draw[<-](.5,1.2)--(1.7,1.2);
\draw[<-](4.5,1.2)--(3.3,1.2);

\end{tikzpicture}

\caption{The stack $s_1$ contains the word $abaa\square \square$
    and stack $s_2$ 
	contains the word~$ba\square \square$. If the state of the Turing machine is
	q then the overall configuration of the
    Turing machine is $\square \square aaba\langle q \rangle ba \square \square$.
    Here, $\ell=8^{6}$ as the height of both stacks 
	is $7$. 
}
\label{fig:general}
\end{center}
\end{figure}
The proof is by a reduction from the termination problem of a 
$\mathcal{O}(2^n)$-time bounded Turing machine.
We suppose the tape alphabet of the Turing machine is $\{a,b,\square\}$ with
$\square$ denoting the empty-cell symbol. For convenience, we also assume
that the machine always overwrites empty-cell symbols it reads and replaces
them with some sequence of symbols meant to internally represent an
empty-and-read cell. This is clearly no loss of generality. Moreover
it implies that cells with $a$'s and $b$'s are never
separated by $\square$'s.
    
    The tape can be encoded into the
    two stacks $s_1$ and $s_2$, such that the top of the stack $s_1$ encodes the
    contents of the tape cell the head of the Turing machine is currently at;
    the top of $s_2$, those of the tape cell immediately right of it; the bottom
    of $s_1$, the leftmost part of the simulated tape; the bottom of $s_2$, its
    rightmost part.  The stack alphabet we use consists of $3$ binary codes
    \code{110}, \code{101}, \code{100}, corresponding to the symbols from the
    Turing-machine tape $a$, $b$, $\square$.
    We then keep the contents of $s_1$ and $s_2$ in counters $c_3$ and $c_4$ by
    interpreting the binary codes as \emph{little-endian} binary numbers (that
    is, we write bits from left to right and the right-most bits are the least
    significant) and keeping the tops of the stacks in the most significant bits
    of the counters.  The technical difficulty of the proof lies in making sure
    that the resulting counter program is simple and in simulating \code{push}
    and \code{pop} instructions efficiently by using linear updates on the
    counters. Apart from counters $c_3,c_4$ used to simulate the stacks as previously
    described, we use counters $c_1,c_2$ for tests as well as additional counters described below.

    Let $i \in \{1,2\}$.
    The contents of the stack $s_i$ (encoded by counter $c_{2+i}$) end after the
    first \emph{end-of-stack} symbol \code{100}. That is, since \code{100}
    represents an empty cell in the Turing-machine tape, the rest of the tape
    (and stack) contents are considered to be trash. Note that this means our
    stacks have a \emph{virtual height}, the number of symbols before an
    end-of-stack symbol, and an actual height, the actual number of symbols.
    Throughout our simulation, we will make sure both stacks always have the
    same height $h$. Additionally, we will keep a counter $\ell$ whose value
    will always be a power of $8$ divided by $2$. More precisely, we will maintain the
    invariant $\ell = 8^{h}/2$. Intuitively, $\ell$ will be used as a pointer to
    the first bit (from left to right, thus its most significant bit) of the
    second element (from top to bottom) of both stacks as encoded in the
    corresponding counters. Finally, we will also make sure $h \ge 2$ always
    holds by initializing both stacks with two \code{100} symbols that will
    never be popped. (Additionally, all the \emph{trash}
    below these two symbols in the stack will be just
    invalid \text{000} codes.)
    See Figure~\ref{fig:general} for an illustration of the
    proposed encoding. Henceforth, for all $j$,
    we denote by $v_j$ the value of the counter
    $c_j$.
    Observe that the above definitions imply the following
    invariant always holds.
    \begin{equation}\label{inv:l-pointer}
        8\ell < v_{2+i} < 16\ell \text{ and }
        v_{2+i} \not\in \{8\ell, 10\ell, 12\ell\}
    \end{equation}
    Indeed $8\ell$, $10\ell$, and $12\ell$, can be thought of as simulated
    stacks with $\square$, $b$, and $a$, as their respective topmost symbols and
    invalid \code{000} symbols below.
    Therefore, one can test the symbol
    $\sigma$ at the top of a simulated stack $s_i$ by using the corresponding
    counter test based on the following equivalences.
    \begin{equation}\label{eqn:equiv-tests}
    \begin{aligned}
        \sigma = a \iff & 12\ell \leq v_{2+i} < 14\ell\\
        \sigma = b \iff & 10\ell \leq v_{2+i} < 12\ell\\
        \sigma = \square \iff & 8\ell \leq v_{2+i} < 10\ell
    \end{aligned}
    \end{equation}
    Unfortunately, doing as described above may not always yield a simple
    linear-update counter program. In the sequel we will describe how to
    simulate pushing and popping symbols from the stacks as well as how to test
    popped symbols to obtain a simple counter program.
    
    \subsection{Pushing} To push a symbol $\sigma$ into $s_1$, the counter
    program executes the following instructions: $\ell \leftarrow 8\ell$; $c_3
    \leftarrow c_3 + s\ell$; $c_4 \leftarrow 8c_4$; where $s$ is $8$, $10$, or
    $12$, depending on whether $\sigma$ is $\square$, $b$, or $a$, respectively.
    Note that multiplication is not an allowed operation for our linear-update
    counter programs, so, for instance, the instruction $\ell \leftarrow 8\ell$
    has to be realized via a (constant) number of addition instructions.
    Roughly speaking, three new most-significant bits are ``freed'' in $c_3$ and
    then set by adding to it a stack whose topmost symbol is the one we want to
    push. Then, we left-shift (multiply by $8$, to be technically correct) the
    contents of the other counter since we have updated the height pointer.  The
    operations are executed \emph{mutatis mutandis} when pushing a symbol into
    $s_2$.
    
    \subsection{Popping and testing} To pop a symbol from a stack, we set to
    zero the $3$ most significant bits of the counter simulating the popped
    stack and then left-shift it. Additionally, prior to popping, we test the
    symbol at the top of the stack. Once more, let $i \in \{1,2\}$. Instead of
    using the equivalences from Equation~\eqref{eqn:equiv-tests}, we observe
    that thanks to the invariant from Equation~\eqref{inv:l-pointer} and because
    of our encoding we have that
    \begin{equation}\label{eqn:simpleness}
    \begin{aligned}
        \max\{v_3,v_4,\ell\} & \leq 16\ell \leq 64|v_{2+i} - 12\ell|, \text{ and}\\
        \max\{v_3,v_4,\ell\} & \leq 16\ell \leq 64|v_{2+i} - 10\ell|.
    \end{aligned}
    \end{equation}
	The left inequalities follow from the fact that
    $v_{2+i} < 16\ell$ and $\ell$ is positive.
    For the right inequalities we have the following lemma.
    \begin{lemma}
        For all $i \in \{1,2\}$ and both $t \in \{10,12\}$ it holds that
        \(
            \ell \leq 4 |v_{2+i} - t\ell|.
        \)
    \end{lemma}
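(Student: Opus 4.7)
The plan is to exploit the binary encoding of the stack stored in counter $c_{2+i}$. Since $\ell = 8^h/2$, the bits of weights $2\ell, 4\ell, 8\ell$ hold the top symbol, and as every stack symbol is one of \code{100}, \code{101}, \code{110}, the top symbol contributes exactly $t_0 \ell$ to $v_{2+i}$ for some $t_0 \in \{8, 10, 12\}$. Writing $v_{2+i} = t_0 \ell + r$, where $r$ is the value of all bits strictly below the top symbol, the lemma reduces to showing $|r + (t_0 - t)\ell| \geq \ell/4$ for every $t \in \{10, 12\}$.

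The heart of the proof is a two-sided bound on $r$. For the lower bound, the invariant $h \geq 1$ (equivalently, virtual height $\geq 2$) forces a second-from-top symbol, which is again one of \code{100}, \code{101}, \code{110}; since each of these has a $1$ in its most significant bit, and that bit has weight $\ell$, it follows that $r \geq \ell$. For the upper bound, the second symbol contributes at most $3\ell/2$ (attained by \code{110}), and all stack content strictly below the second symbol occupies bits of weight at most $\ell/8$, with each additional symbol contributing a geometrically decreasing amount ($3\ell/16, 3\ell/128, \dots$); summing this geometric series yields $r \leq 3\ell/2 + 3\ell/14 = 12\ell/7 < 7\ell/4$.

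The lemma now follows by a short case analysis on $(t_0, t)$. If $t_0 = t$ then $|v_{2+i} - t\ell| = r \geq \ell$; if $|t_0 - t| = 2$ then the relevant quantity is either $2\ell - r$ (using $r < 2\ell$) or $r + 2\ell$, and the tighter of the two is $2\ell - r \geq 2\ell - 12\ell/7 = 2\ell/7 > \ell/4$; finally, if $|t_0 - t| = 4$, which occurs only for $(t_0, t) = (8, 12)$, then $|v_{2+i} - t\ell| = 4\ell - r > 4\ell - 7\ell/4 = 9\ell/4$. Every case yields $|v_{2+i} - t\ell| \geq \ell/4$, which is equivalent to $\ell \leq 4|v_{2+i} - t\ell|$ as required.

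The main obstacle is the upper bound on $r$: the naive bit-count gives only $r < 2\ell$, which is insufficient to produce any slack in the case $|t_0 - t| = 2$. Overcoming this requires exploiting the specific structure of the encoding, namely that no valid stack symbol equals \code{111} (capping the second symbol's contribution at $3\ell/2$), and that contributions from symbols further down drop off geometrically and therefore sum to strictly less than the buffer provided by avoiding \code{111}.
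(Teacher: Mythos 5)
Your proof is correct and follows essentially the same route as the paper's: both hinge on the facts that the symbol directly below the top is always a valid code with leading bit $1$ (giving the lower bound $r\geq\ell$, i.e.\ the paper's refined intervals $9\ell,11\ell,13\ell$), and that the maximal realizable content below the top is a geometric sum of \code{110}-blocks (your $r\leq 12\ell/7$ is exactly the paper's bound $A\leq 10\ell+12\ell/7$), yielding the same $2\ell/7>\ell/4$ slack. You merely package this as uniform bounds on $r$ plus a case analysis on $(t_0,t)$ instead of the paper's interval invariant with two special cases, which is a cosmetic difference.
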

    \begin{proof}
		%
        Since all codes have a $1$ as their most significant bit, and since the
        symbol below the top of the stack is always $a$, $b$, or $\square$,
        (recall that we push two $\square$ into the stack from the beginning and
        never pop them) then we have the following
        refinement of Equation~\eqref{eqn:equiv-tests}
        where $\sigma$ denotes the top of the simulated
        stack.
        \begin{equation}\label{eqn:new-invariant}
        \begin{aligned}
        \sigma = a \iff & 13\ell \leq v_{2+i} < 14\ell\\
        \sigma = b \iff & 11\ell \leq v_{2+i} < 12\ell\\
        \sigma = \square \iff & 9\ell \leq v_{2+i} < 10\ell
    	\end{aligned}
        \end{equation}
        It follows that the claim holds whenever $v_{2+i} > t\ell$.
Observe that by the above inequalities $v_{2+i}$ is never $t\ell$.

        Let us now focus on the cases when $v_{2+i} < t \ell$. This can only
        happen if $b$ is on the top of the stack and $t = 12$ or if $\square$ is
        on the top of the stack and $t \in \{10,12\}$.
        Note that it suffices to consider the cases when $v_{2+i}$ encodes a
        stack with topmost symbol $b$ or $\square$ and $t$ is $12$ or $10$,
        respectively.
        We have the
        following equivalences
        \begin{align*}
            \ell =& \left(\overbrace{\code{000}\, \code{100}\, \code{000}\,
        \dots}^{h \text{ blocks of } 3} \right)_2\\
            10\ell =& \left(\code{101}\, \code{100}\, \code{000}\,
                \dots\right)_2\\
            12\ell =& \left(\code{110}\, \code{100}\, \code{000}\,
                \dots\right)_2
        \end{align*}
        Additionally, because of our encoding (see 
        Equation~\eqref{eqn:new-invariant}) we also have the following.
        \begin{align}
            \label{eqn:symb}
            v_{2+i} \leq & \left(\overbrace{\code{101}\, \code{110}\, \code{110}\,
                \dots}^{h = \log_8(2\ell)} \right)_2 &
                \text{if the top symbol is } b\\
            \label{eqn:empty}
            v_{2+i} = & \left(\code{100}\, \code{100}\, \code{000}\, \dots\right)_2
                & \text{if the top symbol is } \square
        \end{align}

        For the first case, i.e. $v_{2+i}$ encodes a stack with $b$ as the
        symbol in its top and $t = 12$. For Inequality~\eqref{eqn:symb}, on the right-hand side
        we present the largest integer realizable via
        our encoding, i.e., the contents of the stack are
        (from top to bottom) $b a a \dots$ We let $A$ denote
        the right-hand side of the Inequality~\eqref{eqn:symb}. Observe that
        \[
            A + \left(\overbrace{
                \code{000}\, \code{001}\, \code{000}\, \dots
            }^h \right)_2 \leq 12\ell.
        \]
        The desired result thus follows from Inequality~\eqref{eqn:symb} and the
        fact that the second summand in the above equation is $\frac{\ell}{4}$.
        For the second case, i.e. $v_{2+i}$ encodes a stack with $\square$ on
        its top and $t = 10$, we proceed similarly. It suffices to
        note that $v_{2+i} +
        \ell \leq 10\ell$.
    \end{proof}

    It follows that we can copy $64v_{2+i}$ and $64t\ell$ (for $t \in
    \{10,12\}$) into 
    counters $c_1$
    and $c_2$ and test them instead of $c_{2+i}$ and $\ell$ directly. More
    precisely, we execute the following
    \begin{align*}
        {} : {} & c_1 \leftarrow 64c_{2+i}\\
        {} : {} & c_2 \leftarrow 64\cdot 12 \cdot \ell\\
        {} : {} & \code{if } c_1 \geq c_2 \code{ goto } p\\
        {} : {} & c_2 \leftarrow 64\cdot 10 \cdot \ell\\
        {} : {} & \code{if } c_1 \geq c_2 \code{ goto } q\\
        r : {} & \dots
    \end{align*}
    where from instruction $p$ onward we handle the case 
    where the topmost
    symbol is \code{110} (encoding an $a$); from $q$, the case 
    where it is
    \code{101} (encoding a $b$); and from $r$, the case where 
    it is \code{100}
    (encoding $\square$). From $p$, we first make sure the 
    \code{pop} is
    correctly simulated by executing \( p : c_{2+i} \leftarrow 
    c_{2+i} -
    12\ell;~c_{2+i} \leftarrow 8c_{2+i}.  \) Popping $b$ from 
    $q$ is handled
    similarly.
    
    \subsection{Simplicity}
    To conclude the proof, we observe that the above instructions --- in
    particular, the ones simulating the popping and testing --- satisfy the
    properties required of the tests for the program to be simple (see
    Equation~\eqref{eqn:simpleness}). Note that on faithful simulations of a
    Turing machine, all reachable configurations of the constructed counter program have
    non-negative counter values. Hence, the program is indeed simple.
    \qed

\section{Proof of~\texorpdfstring{\Cref{thm:counter-to-slppow}}{Theorem~\ref{thm:counter-to-slppow}}\label{app:SLP-hard}}

In this section, we show that the Powering problem is \EXP-complete. To this end, we reduce the bounded termination problem for simple linear-update counter programs to the \emph{powering problem}. 
    
   Let us consider a linear-update counter program $\calC$ with $n$ counters and
    $m$ instructions. Without loss of generality, we will assume that the
    addition and subtraction instructions are all of the form $c_i \leftarrow
    c_i \op c_j$. That is, the assigned counter $c_i$ appears as the left
    operand of the instruction. Indeed, for any instruction $c_i \leftarrow c_j
    \op c_k$ that does not conform to this (i.e. $i \neq j$), we can replace it
    by a sequence of three conforming instructions that achieves the same
    effect: a ``reset'' via subtraction $c_i \leftarrow c_i - c_i$; a ``copy''
    via addition $c_i \leftarrow c_i + c_j$; and, finally, the intended
    operation $c_i \leftarrow c_i \op c_k$.  Since no test instruction is
    introduced, this transformation preserves simplicity of the program. 
    Recall that the halt instruction, i.e. the $m$-th instruction
	of $\calC$, is of the form $m : \code{if } c_1 \geq c_1
	\code{ goto } m$. That is, $m$ is ``absorbing''.
    
    We will now describe how the Max-Plus-Minus SLP $S$ is constructed. The
    SLP has two distinct variables $z,o$ whose values will be initially
    set to $0$ and $1$, respectively. Additionally, we will make sure that 
    the values of
    these variables remain constant with respect to applications of $S$. 
    For convenience, we will use multiplication by integer constants in the 
    description of the SLP. These are to be interpreted as sequences
    of addition commands. (We can obtain multiplication by
    binary constants using polynomially-repeated doubling.) Apart from the variables $x_1,\dots,x_n,Q_1,\dots,Q_n$ introduced in the main text, we use auxiliary variables $d_i,d_i',e_i,s_i,s_i',s_i''$ for each $1\leq i \leq n$. 
    
    \subsection{Linear updates} First, for all
    $1 \leq i \leq n$ we initialize variables $d_i$
    by appending to $S$ the following command
    \[
    	d_i \leftarrow z.
    \]
    Now, for all addition and subtraction instructions
    \[
    	p : c_i \leftarrow c_i \op c_j
    \]
    from $\calC$ we append to the SLP the following commands
    \[
    	d'_i \leftarrow \begin{cases}
        	x_j & \text{if } Q_j = p x_j\\
            0 & \text{otherwise}
        \end{cases}
    \]
    and
    \begin{gather*}
    	d_i \leftarrow d_i \op d'_i\\
        e_i \leftarrow (p+1) \cdot d_i.
    \end{gather*}
    Note that the first command is a conditional command, and 
    so we are making
    use of its implementation from
    Lemma~\ref{lem:conditional-commands} (recall
    that $\calC$ is simple, and thus the lemma indeed
    applies).
    
    Observe that if the $d_i$ are not updated by commands in the rest of the SLP, then
    we have that, for all $1 \leq i \leq n$,
    $d_i$ is non-zero only if $\nu(Q_j) = p \nu(x_j)$.
    Hence, if the valuation of the variables
    before the application of $P$ is a correct encoding of a configuration of $\calC$
    then, for all $1 \leq i \leq n$, $d_i$ is non-zero if and only if the current
    simulated instruction will affect the value of $c_i$ by
    a non-zero value of $d_i$.
    
    \subsection{Tests and jumps} For all test-and-jump instructions
    \[
    	p : \code{if } c_i \geq c_j \code{ goto } q
    \]
    from $\calC$ we append, for all $1 \leq k \leq n$, the following commands
    to $S$
    \begin{gather*}
    	s''_k \leftarrow \begin{cases}
        	x_k & \text{if } x_i \geq x_j\\
            0 & \text{otherwise,}
        \end{cases}\\
    	s'_k \leftarrow \begin{cases}
        	s''_k & \text{if } Q_j = p x_j\\
            0 & \text{otherwise,}
        \end{cases}
    \end{gather*}
    and
    \[
    	s_k \leftarrow s'_k \cdot (q - p - 1).
    \]    
    If $i = j$, then we forego the first conditional
    commands and just set $s''_k$ to the value of
    $x_k$.
    
    Some remarks about the above commands are in order. First, the command
    updating $s''_k$ is not completely covered by
    Lemma~\ref{lem:conditional-commands} since we may have $k \in \{i,j\}$.
    The second command is also not explicitly
    implemented in Lemma~\ref{lem:conditional-commands}.
    It should, however, be easy to see that the second command is indeed implementable using
    our chosen encoding using the same Max-Plus-Min
    expression used for the first part of Lemma~\ref{lem:conditional-commands} (one possible way is to use similar approach as for  	modifying the addition in counter programs above, i.e. to first reset $s'_k$ using subtraction and then use addition to perform the copy).
    Let us now extend Lemma~\ref{lem:conditional-commands} to cover the first
    command.

\begin{lemma}
    Let $p\in \{1,\dots,m\}$ and $i,j,k\in \{1,2\}$ be such that $i \neq j$ and
    $k \in \{i,j\}$. The following equation holds for all valid valuations~$\nu$
    that encode configurations of a simple linear-update CM, i.e.
    $\nu(x_i)=v_i$ and    $\valuation(Q_i) = p\valuation(x_i)$ for all $1 \leq i
    \leq n$, before executing a test-and-jump instruction $q : \code{if } c_i
    \geq c_j \code{ goto } r$.
    \begin{equation}\label{eqn:eq2}
        a_k \cdot \max(0, \nu(x_{\bar q_k}) + \min(0, \nu(x_i) - \nu(x_j))) = 
        \begin{cases}
            \nu(x_k) & \text{if } \nu(x_i) \geq \nu(x_j) \\
            0 &\text{otherwise.} 
        \end{cases}
    \end{equation}
(Here $a_k$ is as defined in~\Cref{sec:reductions}.)
\end{lemma}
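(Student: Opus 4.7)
The plan is a direct case analysis on whether $\nu(x_i) \geq \nu(x_j)$ holds, exploiting the defining properties of simple counter programs to show that the ``masking'' expression $\min(0, \nu(x_i) - \nu(x_j))$ is large enough in magnitude to zero out $\nu(x_{\bar q_k})$ exactly when the test fails.

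First I would unfold the assumption that $k \in \{i,j\} \subseteq \{1,2\}$ together with the first simplicity condition, which says that for every test instruction $q$, there is an index $\bar q_k$ such that $\nu(x_k) = a_k \nu(x_{\bar q_k})$ with $a_k \in \{64, 64\cdot 10, 64\cdot 12\}$. As noted in the paragraph following the definition of simple CPs, we also have $\bar q_k \geq 3$, which will be essential later. Since $\nu$ is valid, all variable values are non-negative, so $\nu(x_{\bar q_k}) \geq 0$.

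Next I would split on the sign of $\nu(x_i) - \nu(x_j)$. If $\nu(x_i) \geq \nu(x_j)$, then $\min(0, \nu(x_i) - \nu(x_j)) = 0$, so the inner $\max$ evaluates to $\nu(x_{\bar q_k})$, and multiplying by $a_k$ yields $a_k \nu(x_{\bar q_k}) = \nu(x_k)$, as required. If instead $\nu(x_i) < \nu(x_j)$, then $\min(0, \nu(x_i) - \nu(x_j)) = \nu(x_i) - \nu(x_j) = -|\nu(x_i) - \nu(x_j)|$, and it suffices to show $\nu(x_{\bar q_k}) \leq |\nu(x_i) - \nu(x_j)|$, which will force the $\max$ to $0$.

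The key ingredient here will be the second simplicity condition, which guarantees $|\nu(x_1) - \nu(x_2)| \geq \max\{\nu(x_\ell) : 3 \leq \ell \leq n\}$. Since $\{i,j\} = \{1,2\}$ and $\bar q_k \geq 3$, we get $|\nu(x_i) - \nu(x_j)| \geq \nu(x_{\bar q_k})$, giving exactly what is needed. The main obstacle (really the only subtle point) is making sure we are entitled to use the simplicity condition at this particular program point: we must argue that the valuation $\nu$ encodes a configuration reached \emph{just before executing the test instruction} $q$, which is exactly the hypothesis of the lemma statement, and so the invariants guaranteed by simplicity are available. Combining both cases yields~\eqref{eqn:eq2}, completing the proof.
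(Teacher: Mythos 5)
Your proposal is correct and follows essentially the same route as the paper's proof: use the first simplicity condition ($\nu(x_k)=a_k\nu(x_{\bar q_k})$, with $\bar q_k\geq 3$) to reduce the claim to evaluating the inner $\max$, and use the second simplicity condition ($|\nu(x_1)-\nu(x_2)|\geq \nu(x_\ell)$ for $\ell\geq 3$) to show that when $\nu(x_i)<\nu(x_j)$ the masking term forces $\nu(x_{\bar q_k})+\nu(x_i)-\nu(x_j)\leq 0$, so the $\max$ yields $0$. No gaps; the case analysis and the appeal to the lemma's hypothesis that $\nu$ encodes a configuration at the test instruction match the paper's argument.
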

\begin{proof}
	By simplicity of the CM, and since $k \in \{i,j\}$,
    we know that there exists 
    $x_{\bar q_k }$ such that $\nu(x_k) = a_k 
    \nu(x_{\bar q_k})$, i.e. $x_k$ is the 
    $a_k$-scaled-up version of $x_{\bar q_k}$.
    Hence, it suffices to show that the left-hand side of
    Equation~\eqref{eqn:eq2}
    evaluates to $a_k \nu(x_{\bar q_k})$ if $\nu(x_i) \geq 
    \nu(x_j)$ and to $0$ otherwise.
    
    We have that $\nu(x_{\bar q_k})
    + \nu(x_i) - \nu(x_j) \leq 0$ 
    whenever
    $\nu(x_i) < \nu(x_j)$
    by simplicity since
    $\abs{\nu(x_j) - \nu(x_i)} \geq x_\ell$ for all
    $\ell \neq i,j$ and in particular for $\ell = 
    {\bar q_k}$. 
    Hence, the outermost $\max$ yields a positive
    number only if $\nu(x_i) > \nu(x_j)$ and $0$ otherwise. 	Furthermore,
    it yields $\nu(x_{\bar q_k})$ because the
    inner $\min$ evaluates to $0$ if $\nu(x_i) \geq \nu(x_j)$
    does indeed hold.
\end{proof}
    The appended commands are such that, if the $s_k$
    are not updated by commands in the rest of the SLP, then we have that,
    for all $1 \leq k \leq n$, $s_k$ is non-zero only if $\nu(Q_j) = p \nu(x_j)$
    and $\nu(x_i) \geq \nu(x_j)$. It follows that if the valuation of the variables
    before applying $S$ satisfies our encoding then $s_k$ holds the ``jump difference''
    (precisely, the difference $q - p - 1$ if $v_i \geq v_j$
    holds for the simulated configuration and $0$ otherwise)
    that should be applied to all $Q_i$ so as to obtain the new correct encoding
    for the next configuration of $\calC$. Note that, we have overshot by $-1$ if
    the test of the simulated instruction is passed. This will be compensated
    for in the sequel.
    
    \subsection{Final counter and PC update} To conclude our construction of $S$
    we have to affect the $x_i$ by the $d_i$ and update the $Q_i$ to account
    for the change of configuration in $\calC$. We can do the latter via
    the following commands appended for all $1 \leq i \leq n$.
    \begin{gather*}
    	Q_i \leftarrow Q_i + x_i\\
    	x_i \leftarrow x_i + d_i\\
        Q_i \leftarrow Q_i + e_i\\
        Q_i \leftarrow Q_i + s_i
    \end{gather*}
    Note that the first command increases the 
    ``program counter'' by exactly one, i.e.
    if $\nu(Q_i) = p \nu(x_i)$ before the command,
    then $\nu(Q_i) = (p+1) \nu(x_i)$ after it.
    Intuitively, if the simulated instruction was $p$, then we move to $p + 1$. That is,
    unless $s_i$ is non-zero. Otherwise, we move to the instruction dictated by the
    test-and-jump instruction that was just simulated. The second and third commands take into account the update of the value of counter $x_i$ into our state-encoding too.
    
    It is easy to see that, for all $N \in \mathbb{N}$, $S$ satisfies
	\[
		\llbracket S\rrbracket^N (\nu)(Q_i)
    	\geq 
    	m \cdot \llbracket S\rrbracket^N (\nu)(x_i)
	\]
    for any $1 \leq i \leq n$ if and only if $\calC$ halts in at most $N$ steps. To conclude the proof it suffices to observe
    that we can add a variable that is assigned $m$-times the
    value of $x_i$ at the end of the SLP to obtain an
    instance of \textsc{SLPPow} whose answer is positive
    if and only if the above inequality holds.
    \qed

\end{document}